\newcommand{\comp}{\mbox{\small $\circ$}}
\newcommand{\tcdot}{\hspace{-0.05cm}\cdot \hspace{-0.05cm}}
\newcommand{\mt}[1]{\mbox{\tiny$#1$}}
\theoremstyle{plain}
\newtheorem{thm}{Theorem}[section]
\newtheorem{lem}[thm]{Lemma}
\newtheorem{proposition}[thm]{Proposition}
\theoremstyle{definition}
\newtheorem{definition}[thm]{Definition}
\newtheorem{example}[thm]{Example}
\theoremstyle{remark}
\newtheorem{remark}[thm]{Remark}
\begin{document}

\title[On the geometry of quantum indistinguishability]
{On the geometry of quantum indistinguishability}

\author{A F Reyes-Lega}

\address{Departamento de F\'isica, Universidad de los Andes,
AA 4976, Bogot\'a D.C., Colombia} \ead{anreyes@uniandes.edu.co}
\begin{abstract}
An algebraic approach to the study of quantum mechanics on configuration spaces with a finite fundamental
group is presented. It uses, in an essential way, the Gelfand-Naimark and Serre-Swan equivalences and thus
allows one to represent  geometric properties of such systems in algebraic terms. As an application, the problem
of quantum indistinguishability is reformulated in the light of the proposed approach. Previous attempts
aiming at a proof of the spin-statistics theorem in non-relativistic quantum mechanics are explicitly recast
in the global language inherent to the presented techniques. This leads to a critical discussion of
single-valuedness of wave functions for systems of indistinguishable particles.   Potential applications of the methods presented in
this paper to problems related to  quantization, geometric phases and  phase transitions in spin systems are
proposed.
\end{abstract}

\ams{81Q70, 81S05, 81Q35}
\submitto{\JPA}
\maketitle

\section{Introduction}
As is well-known, there are several proofs of the \emph{spin-statistics theorem} that are valid for quantum
fields. The first versions of the theorem, due to Fierz \cite{Fierz1939} and Pauli \cite{Pauli1940}, date
back to the forties and were valid only for free relativistic quantum fields. The developments that
eventually led to the establishment of the theorem in the general case have been described in the book by
Duck and Sudarshan \cite{Duck1997}. The modern proofs of the theorem, in the framework of
(axiomatic/algebraic) quantum field theory, are described in the books by Streater and Wightman
\cite{WightmanCPT:00} and by Haag \cite{Haag1996}. Although these proofs use relativistic invariance in an
essential way, there are other alternative approaches that do not make use of the full Poincar\'e covariance
of the theory. One of them, based on Schwinger's action principle, has been pioneered by Sudarshan
\cite{Duck1997,Sudarshan1975}.

 A more radical point of view, that has been explored for several years, is based on
the idea that it might be possible to explain the spin-statistics connection from within (nonrelativistic)
quantum mechanics. There have been  many proposals along this line of thought.  One of them, proposed by
Balachandran  and coworkers \cite{Balachandran:90}, uses classical configuration spaces of identical
particles and resembles the work of Finkelstein and Rubinstein on kink configurations for nonlinear fields
\cite{Finkelstein1968}. Even though their approach does not make direct use of quantum field theory, the idea
of pair creation/annihilation is incorporated indirectly in the topology of the configuration space. A
similar approach has also been developed, independently, by Tscheuchner \cite{Tscheuschner1989} (see also
\cite{Mickelsson1984}).

More recently,  Berry and Robbins \cite{Berry1997} have developed a new approach in which no use of
relativity or quantum field theory is made. The basic idea of Berry and Robbins is to construct an operator
that implements the simultaneous and continuous exchange of the particles' spin and position labels. An
essential part of their approach is the imposition of single-valuedness of the total wave function under
exchange, as a way to incorporate indistinguishability without using the symmetrization postulate. Their construction has the virtue of being very explicit and of giving the correct statistics sign,
for any value of the spin. Unfortunately (as realized by the same authors) this construction is not the only
one meeting their requirements: There are other possibilities, allowing for a violation of the
spin-statistics connection \cite{Berry2000}.
     In spite of the fact that it does not provide a ``proof'' of the spin-statistics theorem,
     the Berry-Robbins approach \cite{Berry1997,Berry2000,Berry2000a,Berry2008} has received a lot of attention
     \cite{Twamley1997, Sudarshan2003, Chru'sci'nski2004, Wightman2000} and has inspired new questions and developments,
     both in mathematics \cite{Atiyah2001,Atiyah2002,Atiyah2002a} and physics
     \cite{Anandan1998,Peshkin2003,Harrison2004,Papadopoulos2004,Peshkin2006,Papadopoulos2010}.

     Among these, a proposal to study the spin-statistics connection using
     tools from noncommutative geometry was made by Paschke \cite{Paschke2001}. Using only the $SU(2)$ symmetry
     of the algebra $C(S^2)$ of continuous functions on the sphere, he proved that the subspace consisting of
     all \emph{odd} functions in $C(S^2)$ can be regarded, via the Serre-Swan equivalence, as the space of
     sections of a line bundle over the projective space $\mathbb R P^2$. Using this result, we recognized
     that the exchange-rotation operator of Berry and Robbins could be rewritten as a projector. This projector turned
     out to be given precisely by direct sums of  the  projectors constructed by Paschke \cite{Papadopoulos2004}.
      These results, in turn, led to
     a critical discussion of single-valuedness and its role for the spin-statistics connection
     \cite{Papadopoulos2004,Papadopoulos2010}. The discussion, though, was based on results that
     are valid only for 2 particles.
Therefore, the main purpose of the present paper is to develop and generalize the framework on which the
discussions in \cite{Papadopoulos2004} and \cite{Papadopoulos2010}  are based.
 Although the main motivation for the present paper
has been the spin-statistics connection, the results obtained fit in the more general context of quantum
mechanics on general configuration spaces. As explained in the last section, there are potential applications
of the techniques developed in this paper to the study of certain quantization problems, to problems related
to Berry phases and to the study of spin systems.

Before embarking on the technical part of the paper, I want to make some comments regarding the status of
what can be called the \emph{``configuration space approach''} to spin-statistics, in view of the opinion
among several authors that any attempt at explaining the connection without relativistic quantum field theory
is doomed to fail (cf. \cite{Wightman2000,Allen2003}).

To a big extent, the interest in alternative proofs of the
spin-statistics theorem in the last years increased due to
Neuenschwander \cite{Neuenschwander1994} who,  motivated by a -now
famous- remark of Feynman, asked for a simple explanation of
spin-statistics. Some direct responses soon appeared. Currently,
numerous attempts at simple proofs of the theorem can be found in
the literature. Unfortunately, a significant part of this
production is based on  assumptions and methods lacking either a
clear  physical justification or a sound mathematical basis, this
leading (in some cases) to rather superficial treatments of the
problem. Critiques to such oversimplified approaches can be found,
e.g.,  in Hilborn \cite{Hilborn1995}, Romer \cite{Romer2002} and
Duck and Sudarshan \cite{Duck1997,Duck1998a}.

It seems, therefore, that Feynman's demand for simplicity -in a straightforward sense- cannot be met. But one
should also  bear in mind that many of the nonstandard  approaches have a theme in common, one
that touches upon an important part of quantum theory: The study of quantum mechanics on general
configuration spaces \cite{Isham1984}. This more general problem  touches on the very foundations of quantum mechanics while connecting representation theory with geometry, topology and measure theory, as can be seen, for instance, from the work of Mackey \cite{Mackey1978}. Moreover this general problem has attracted the attention of physicists
ever since the early days of quantum theory. For example, the early investigations by Pauli
\cite{Pauli:39,Pauli1958} were motivated by the search for a single-valuedness criterion for wave functions.
The unavailability of a classical analogue for spin led Bopp and Haag \cite{Bopp:50} to study the quantum
theory of the rotation group, regarded as a configuration space. The crucial contributions made by  Schulman
\cite{Schulman:68} and by Laidlaw and DeWitt \cite{Laidlaw1971} were based on the observation (due to
Schulman) that the path integral in a multiply connected space has to be modified with respect to its usual
form. Nowadays it is clear that the topology of the configuration space plays a prominent role in the
quantization of a classical system (cf. \cite{Isham1984,Souriau:69,Dowker1972, Leinaas1977, Woodhouse:80,
Horvathy:89, Morandi:92}   and references therein). More generally, topological ideas -not necessarily
related to quantization-  have permeated many branches of physics, including condensed matter physics and
quantum field theory.
 Thus, many of the nonstandard approaches to spin-statistics, rather than seeking a simple explanation for
 the connection, aim at a better understanding of it, stemming (at least partially)
  from the properties of the configuration space. This idea  goes back, at least, to the works of
  Finkelstein and Rubinstein \cite{Finkelstein1968}, Laidlaw and DeWitt \cite{Laidlaw1971},  and Leinaas and
   Myrheim \cite{Leinaas1977}. Many ideas inspired by those works have been put forward in the last decades.
   Perhaps the reader would agree that none of them can claim to have led to a substantially deeper
   understanding of spin-statistics. Why, then, the increasing interest in the topic? Indeed,
   the vast literature dealing with the spin-statistics connection (and particularly with attempts to find new
     and simpler ways to understand it) gives a hint as to the feeling of discomfort of many authors regarding
     its current physical status.

In my opinion, the current interest comes from different sources:
\begin{itemize}
\item    The need to re-examine the connection between spin and statistics in view of new developments in theoretical
   physics as, for instance, in the context of higher dimensional theories \cite{Boya2007},
quantum gravity \cite{Dowker1998,Alexanian2002}, algebraic quantum
   field theory \cite{Tscheuschner1989} or  non-commutative
    quantum field theory \cite{Balachandran2006}.
\item     The fact that many quantum phenomena that depend crucially on the spin-statistics connection take
place outside the realm of relativistic quantum field theory leads one to wonder whether there is a possible
explanation for it based solely on the principles of quantum mechanics.
 \item The current experimental search for violations of the exclusion principle: The recent Trieste conference on
 theoretical and experimental aspects of the spin-statistics connection showed that there is a real interest
 from both experimentalists and theoreticians in deepening our understanding of the connection and that
 intense experimental research around the topic is starting to  be done \cite{Milotti2010}.
\end{itemize}

 Let us finish this introduction with a description of the rest of the paper.
 Section \ref{sec:2} contains the proofs of the theorems
 announced in
 \cite{Papadopoulos2004} and \cite{Papadopoulos2010}. These theorems, along with their proofs,
  constitute the main contribution of the present paper. Section \ref{sec:3} presents  two applications of
  the results from section \ref{sec:2} to the problem of quantum indistinguishability.
 Here, our previous analysis of the Berry-Robbins approach is briefly reviewed.
In section \ref{sec:4} we present our conclusions and an outlook for
future work. In \ref{ap:A} we explain, for the benefit of the reader unfamiliar with these tools,  the
equivalence between projective modules and vector bundles.  Finally, \ref{ap:B} contains background material
on group representations that the reader might find useful in order to understand the motivation behind the
constructions presented in section \ref{sec:2}.
\section{An algebraic characterization of flat bundles on spaces with a finite fundamental group}
\label{sec:2} Let $\mathcal Q$ be a compact  manifold with  fundamental group $\pi_1(\mathcal Q)$. Let $C(\widetilde {\mathcal Q})$ denote the algebra of continuous\footnote{Here we will consider spaces of continuous functions (or sections) but in the applications we have in mind, it is the Hilbert space completion with respect
to a suitable inner product what we are interested in.}, complex functions on $\widetilde
{\mathcal Q}$, the universal covering space of $\mathcal Q$. The (right) action of $\pi_1(\mathcal Q)$ on $\widetilde
{\mathcal Q}$ induces, in a natural way, a representation of $\pi_1(\mathcal Q)$ on $C(\widetilde {\mathcal Q})$. On the
other hand, the homeomorphism $\widetilde {\mathcal Q}/\pi_1(\mathcal Q)\cong {\mathcal Q}$ induces, at the level of
functions, an algebra isomorphism between $C({\mathcal Q})$ and $C(\widetilde {\mathcal Q})^{\pi_1(\mathcal Q)}$, the
latter being the subalgebra of $C(\widetilde {\mathcal Q})$ consisting of all $\pi_1(\mathcal Q)$-invariant functions.
These facts allow one to regard $C(\widetilde {\mathcal Q})$ as a module over $C({\mathcal Q})$. The main
purpose of this section is to obtain an algebraic characterization of  the set of flat complex vector bundles
over $\mathcal Q$, when $\pi_1(\mathcal Q)$ is a finite group, through a detailed study of the module structure of
$C({\mathcal Q})$. The relevance of this characterization for the study of quantum indistinguishability has
been discussed at length in references \cite{Papadopoulos2004,Papadopoulos2010}. Let us remark that theorem 2.1 , lemma 2.2, lemma 2.3 and theorem 2.10 constitute the original contribution of the present paper, whereas theorems 2.5 and 2.7 are well-known results that have been included for the sake of completeness.
\subsection{Decomposition of $C(\widetilde {\mathcal Q})$ into a direct sum of $C({\mathcal Q})$-submodules}
Let $G$ be a finite group. A well-known result from representation theory states that every irreducible
representation of $G$ appears in the regular representation, with a multiplicity equal to its dimension. It
is then natural to consider the projection maps that single out each one of these representations. From the
explicit form of the projection operators, one can easily conclude that the representation space $\mathcal
F(G)$ of the regular representation (the vector space of complex valued functions on $G$) has a basis
consisting of all matrix elements of the irreducible representations of $G$. The generalization of this fact
to compact Lie groups is the celebrated Peter-Weyl theorem~\cite{Simon1996}. In this case, the representation
space is $L^2(G)$. Something similar occurs when we have an action $l: G\times M \longrightarrow M$ of a
group $G$ on a space $M$. Such an action induces, in a natural way, a linear action of $G$ on the space of
complex functions on $M$: If $f$ is such a function and $g\in G$, then $g\cdot f (m):= f(l_{g^{-1}}(m))$. In
the two cases mentioned above, the space $M$ is the group itself and the action $l$ is given by the group
multiplication. But we can also let $M$ be a more general space, on which $G$ acts. If $M$ is a compact
manifold and $G$ is a finite group acting freely on $M$, then the quotient space $M/G$ is again a manifold.
In this case, the  action $l$ also induces a representation of $G$ on the  space $C(M)$ of complex continuos
functions on $M$. The interesting point is that the decomposition of $C(M)$ into irreducible representations,
which is achieved purely by algebraic means, has a geometric interpretation, in the sense that it also
induces a decomposition  of $C(M)$ into a sum of finitely generated, projective $C(M/G)$-modules. It then
follows, from the Serre-Swan theorem, that each such subspace of $C(M)$  can be regarded as the space of
sections of some vector bundle over $M/G$.
 Therefore, we shall now seek to establish the following result:
\begin{thm}
\label{thm:1} Let $G$ be a finite group acting freely on the left on a compact manifold $M$ by $l:G\times M
\rightarrow M$. Then, the decomposition of $C(M)$ into irreducible representations induces, simultaneously, a
decomposition of $C(M)$ as a direct sum of finitely generated and projective $C(M/G)$-modules.
\end{thm}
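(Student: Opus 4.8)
The plan is to realize the isotypic projections of the $G$-representation on $C(M)$ as $C(M/G)$-module endomorphisms, so that a single family of operators produces both the representation-theoretic and the module-theoretic decompositions at once. For each irreducible representation $\rho$ of $G$, with character $\chi_\rho$ and dimension $n_\rho$, I would introduce the standard projector $P_\rho=\frac{n_\rho}{|G|}\sum_{g\in G}\overline{\chi_\rho(g)}\,U_g$, where $U_g f=g\cdot f$ denotes the induced action on $C(M)$. The orthogonality relations for characters give $P_\rho^2=P_\rho$, $P_\rho P_{\rho'}=0$ for $\rho\neq\rho'$, and $\sum_\rho P_\rho=\mathrm{id}$, so that $C(M)=\bigoplus_\rho P_\rho(C(M))$ is precisely the isotypic decomposition of $C(M)$ into irreducible representations.

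The first key observation is that each $P_\rho$ is $C(M/G)$-linear. Identifying $C(M/G)$ with the invariant subalgebra $C(M)^G$, for $\phi\in C(M)^G$ and $f\in C(M)$ one has $U_g(\phi f)=(g\cdot\phi)(g\cdot f)=\phi\,(g\cdot f)$, since $g\cdot\phi=\phi$; hence $P_\rho(\phi f)=\phi\,P_\rho(f)$. Thus every $P_\rho$ is an idempotent in $\mathrm{End}_{C(M/G)}(C(M))$, and the same identity $C(M)=\bigoplus_\rho P_\rho(C(M))$ is simultaneously a direct sum decomposition of $C(M)$ as a $C(M/G)$-module, in which the $\rho$-summand is the $\rho$-isotypic subspace.

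It remains to show that each summand $P_\rho(C(M))$ is finitely generated and projective over $C(M/G)$. Here I would invoke the fact that a direct summand of a finitely generated projective module is again finitely generated and projective, which reduces the problem to proving that $C(M)$ itself is a finitely generated projective $C(M/G)$-module. This is the geometric heart of the argument: because $G$ is finite and acts freely, $\pi\colon M\to M/G$ is a finite covering, in fact a principal $G$-bundle over the compact manifold $M/G$. I would exhibit $C(M)$ as the module of continuous sections of the associated flat vector bundle $E=M\times_G\mathcal F(G)$ of rank $|G|$ (the bundle built from the regular representation), via the $C(M/G)$-linear assignment sending $f$ to the equivariant map $m\mapsto(g\mapsto f(g^{-1}\cdot m))$, whose inverse evaluates a section at the group identity. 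The Serre-Swan theorem then yields that $\Gamma(E)$, and hence $C(M)$, is finitely generated and projective, and the decomposition above inherits this property summand by summand.

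I expect the main obstacle to lie in this last step, namely in setting up the bundle $E$ and the isomorphism $C(M)\cong\Gamma(E)$ with the correct equivariance conventions, and in verifying local triviality. Compactness of $M/G$ is essential and enters through the existence of a finite cover by open sets over which the covering $\pi$ trivializes, together with a subordinate partition of unity; these furnish a finite generating set and a dual \emph{projective basis}, which is exactly what makes $C(M)$ finitely generated and projective. Everything upstream, by contrast, is purely algebraic and reduces to the orthogonality relations and the invariance computation $g\cdot\phi=\phi$.
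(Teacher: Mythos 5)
Your proof is correct, but it diverges from the paper's in a substantive way. On the decomposition itself the difference is minor: the paper uses the matrix-element operators $E^R_{ij}f=\frac{n_R}{|G|}\sum_{g\in G}R_{ij}(g^{-1})\,g\cdot f$ rather than your character projectors, so it obtains the finer splitting $C(M)=\bigoplus_R\bigoplus_{i=1}^{n_R}\mathcal{E}^R_i$ with $n_R$ summands per irreducible $R$; your isotypic component $P_\rho(C(M))$ is exactly $\bigoplus_i\mathcal{E}^\rho_i$, and your $C(M/G)$-linearity argument ($g\cdot\phi=\phi$ for invariant $\phi$) applies verbatim to the $E^R_{ii}$. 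The real divergence is in proving finite generation and projectivity. You reduce both to the single claim that $C(M)$ is finitely generated projective over $C(M/G)$, via the identification $C(M)\cong\Gamma(M\times_G\mathcal{F}(G))$ (sections of the bundle associated to the regular representation) plus Serre-Swan, and then use that a direct summand of a finitely generated projective module is finitely generated projective. This is sound: the action being free makes $q:M\to M/G$ a principal $G$-bundle, $M/G$ is compact Hausdorff, and the equivariant-maps-versus-sections dictionary you invoke is the standard one (the paper itself recalls it right after its proof). The paper instead works by hand: its Lemma 2.2 constructs an explicit finite generating set $\psi^R_{\alpha,ik}=E^R_{ki}\psi_\alpha$ for each $\mathcal{E}^R_i$ from charts adapted to the covering and a partition of unity on $M/G$, and its Lemma 2.3 assembles these generators into an explicit idempotent matrix $P^R$ with entries in $C(M/G)$ such that $\mathcal{E}^R_i\cong P^R(C(M/G)^{N_R})$. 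Your route is shorter and cleanly isolates the geometric input, at the price of using Serre-Swan as a black box for the hard direction (that section modules are finitely generated projective); the paper essentially re-proves that direction in this setting, and the payoff is the explicit projectors $P^R$, which are what the applications in section 3 (and the related Berry-Robbins/Paschke analysis) actually require.
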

We will divide the proof of this theorem in two parts (lemmas \ref{lema:A} and \ref{lema:B} below). Let
$C^{\mathrm G}(M)$ be the subspace of $C(M)$ consisting of all $G$-invariant functions:
\begin{equation}
C^{\mathrm G}(M):= \lbrace f\in C(M)\,|\, f(l_g(x))= f(x)\, \forall x\in M\rbrace.
\end{equation}
Since the product of two invariant functions is again an invariant function, this subspace is also an algebra
(and, in particular,  a ring). We can, therefore, regard $C(M)$ as a \emph{module} over $C^{\mathrm G}(M)$.
Moreover, since $C^{\mathrm G}(M)$ and $C(M/G)$ are isomorphic algebras, we can also regard $C(M)$ as a
$C(M/G)$-module. Let now $\widehat G$ denote the set of equivalence classes of irreducible representations of
$G$. We will tacitly identify each class $[R]$ in $\widehat G$ with a concrete representative $R\in [R]$,
chosen to be also a unitary representation. The dimension of the representation $R$ will be denoted by $n_R$.
Define now, for each $R\in \widehat G$ and for each pair of indices $i,j\in$ $\{1,\ldots,n_R\}$, a map
$E^R_{ij} : C(M) \rightarrow C(M)$, as follows:
\begin{equation}
\label{eq:Proj-E} E^R_{ij}f(x):= \frac{n_R}{|G|}\sum_{g\in G}R_{ij}(g^{-1})f(l_{g^{-1}}(x)),\hspace{1cm}(f\in
C(M),\, x\in M).
\end{equation}
Note the obvious similarity between this definition and the one given in (\ref{eq:B.3}). Functions in the
image of each $E^R_{ij}$ (for $R$ fixed) are related through the action of $G$ on $C(M)$. Indeed, for $f\in
C(M)$ and $g\in G$, we have: \numparts
\begin{eqnarray}
g\cdot(E^R_{ij}f)&=& \sum_{k=1}^{n_R}R_{kj}(g) (E^R_{ik}f),\\
E^R_{ij}(g\cdot f)&=& \sum_{k=1}^{n_R}R_{ik}(g) (E^R_{kj}f).\label{eq:2.3}
\end{eqnarray}
\endnumparts
 The maps $E^R_{ij}$ obey, furthermore, the following important orthogonality relations:
\begin{equation}
\label{eq:2.1} E^{R}_{ik}E^{R'}_{mn}=\delta_{R,R'}\delta_{i,n}E^{R}_{mk}.
\end{equation}
This follows directly from the orthogonality relations for the matrix elements of the representations. Taking
$R=R'$ and $i=k=m=n$  in \eref{eq:2.1}, we see that the operator $E^R_{ii}$ is a \emph{projection} operator.
Moreover, summing  over all $R$ in $\widehat G$ and (for each $R$) over all $i$, we obtain, as in
\eref{eq:B.4}, a completeness relation ($f\in C(M)$):
\begin{equation}
\label{eq:2.5} f=\sum_{R \in \widehat G} \sum_{i=1}^{n_R}E^R_{ii}f.
\end{equation}
In fact, we have:
\begin{eqnarray}
\sum_{R\in  \widehat G} \sum_{i=1}^{n_R}E^R_{ii}f(x) &=& \sum_{R\in \widehat G}
\sum_{i=1}^{n_R}\left\lbrack\frac{n_R}{|G|} \sum_{g\in G} R_{ii}(g^{-1})f(l_{g^{-1}}(x))\right\rbrack\nonumber\\
 &= & \sum_{R \in \widehat G}\sum_{g\in G}\frac{n_R}{|G|}\chi^R(g^{-1})f(l_{g^{-1}} x)\nonumber\\
& = &\sum_{g\in G}\frac{1}{|G|}\underbrace{\sum_{R\in \widehat G} n_R\chi^R(g^{-1}) }_{=|G|\delta_{e,g}}
f(l_{g^{-1}} x)\nonumber\\
& = & f(x).
\end{eqnarray}
\begin{lem}
\label{lema:A} Let $R\in \widehat G$ and $i\in \lbrace 1,\ldots,n_R\rbrace$ and define $\mathcal
E^R_i:=E^R_{ii}(C(M))$.
 Then $\mathcal E^R_i$ is a finitely generated $C(M/G)$-module. Moreover, $C(M)$ can be written, as a
 $C(M/G)$-module, as the direct sum
 \begin{equation}\label{eq:2.9}
 C(M)=\bigoplus_{R\in \widehat G}\bigoplus_{i=1}^{n_R}\mathcal E^R_i.
 \end{equation}
\end{lem}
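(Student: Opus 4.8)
The plan is to reduce both assertions to two facts: that every operator $E^R_{ij}$ is a homomorphism of $C(M/G)$-modules, and that the whole module $C(M)$ is finitely generated over $C(M/G)$. Granting these, the decomposition \eref{eq:2.9} falls out of the orthogonality relations \eref{eq:2.1} together with the completeness relation \eref{eq:2.5}, while the finite generation of each summand is automatic because $\mathcal E^R_i$ is a homomorphic image of $C(M)$.

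First I would establish the module-linearity. Identifying $C(M/G)$ with $C^{\mathrm G}(M)$, let $h\in C^{\mathrm G}(M)$ and $f\in C(M)$. Since $h$ is $G$-invariant we have $h(l_{g^{-1}}(x))=h(x)$, so $h$ pulls out of the average in \eref{eq:Proj-E}:
\[
E^R_{ij}(hf)(x)=\frac{n_R}{|G|}\sum_{g\in G}R_{ij}(g^{-1})\,h(x)f(l_{g^{-1}}(x))=h(x)\,E^R_{ij}f(x).
\]
Hence each $E^R_{ij}$ is $C(M/G)$-linear, and in particular $\mathcal E^R_i=E^R_{ii}(C(M))$ is a $C(M/G)$-submodule of $C(M)$. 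The relations \eref{eq:2.1} show that the idempotents $\{E^R_{ii}\}_{R,i}$ are mutually orthogonal, since taking $R\neq R'$, or $R=R'$ with $i\neq j$, gives $E^R_{ii}E^{R'}_{jj}=0$; therefore the submodules $\mathcal E^R_i$ have trivial pairwise intersections and their sum is direct. The completeness relation \eref{eq:2.5} shows this direct sum exhausts $C(M)$, which proves \eref{eq:2.9} as a decomposition of $C(M/G)$-modules.

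The substantive part is the finite generation, and here I would first prove that $C(M)$ itself is finitely generated over $C(M/G)$; the claim for each $\mathcal E^R_i$ then follows at once, since a homomorphic image of a finitely generated module is finitely generated and $\mathcal E^R_i=E^R_{ii}(C(M))$. Because $G$ is finite and acts freely, the quotient map $\pi\colon M\to M/G$ is a covering map, and compactness of $M/G$ yields a finite cover by evenly covered open sets $U_1,\dots,U_m$ with a subordinate partition of unity $\{\rho_\alpha\}\subset C(M/G)$. Over each $U_\alpha$ I fix one sheet $S_\alpha\subset\pi^{-1}(U_\alpha)$ and a cutoff $\theta_\alpha\in C(M/G)$ equal to $1$ on $\mathrm{supp}\,\rho_\alpha$ with $\mathrm{supp}\,\theta_\alpha\subset U_\alpha$, and set $\Theta_{\alpha,g}:=(\theta_\alpha\comp\pi)\cdot\mathbf 1_{l_g(S_\alpha)}\in C(M)$. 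These finitely many functions generate: given $f\in C(M)$, the coefficients $c_{\alpha,g}\in C(M/G)$ defined by $c_{\alpha,g}(x):=\rho_\alpha(x)f(y)$, with $y$ the unique point of $l_g(S_\alpha)$ over $x$, extend continuously by zero, and a direct evaluation using $\rho_\alpha\theta_\alpha=\rho_\alpha$ and $\sum_\alpha\rho_\alpha=1$ gives $f=\sum_{\alpha,g}(c_{\alpha,g}\comp\pi)\,\Theta_{\alpha,g}$.

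I expect the main obstacle to lie precisely in this last step: one must check that the local ``sheet'' generators $\Theta_{\alpha,g}$ and the coefficient functions $c_{\alpha,g}$ are genuinely continuous on all of $M$ and $M/G$. This is exactly what forces the cutoffs $\theta_\alpha$ and the factor $\rho_\alpha$, whose vanishing near $\partial U_\alpha$ guarantees that extension by zero across sheet boundaries produces continuous functions; the freeness of the action is what makes the sheets $l_g(S_\alpha)$ disjoint and the indicator $\mathbf 1_{l_g(S_\alpha)}$ locally constant, hence continuous on $\pi^{-1}(U_\alpha)$. Everything else, namely the linearity computation and the passage from completeness and orthogonality to a direct sum, is formal.
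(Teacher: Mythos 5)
Your proof is correct, but it reaches finite generation by a genuinely different route than the paper. The paper works inside each $\mathcal E^R_i$ from the start: it localizes an $f=E^R_{ii}f$ onto the sheets of the covering by means of a partition of unity with $\sum_\alpha\phi_\alpha^2=1$, forms the invariant functions $f^{\mathrm s}_{\alpha,g}$, and pushes $E^R_{ii}$ through that decomposition using the equivariance relation \eref{eq:2.3}, which yields the explicit expansion \eref{eq:2.4} of $f$ over the finite set $\psi^R_{\alpha,ik}=E^R_{ki}\psi_\alpha$. You instead prove the stronger and cleaner statement that $C(M)$ itself is a finitely generated $C(M/G)$-module (your sheet-supported cutoffs $\Theta_{\alpha,g}$, whose continuity is secured by the support conditions, play the role of the paper's $\psi_\alpha$), and then obtain each $\mathcal E^R_i$ for free, since $E^R_{ii}$ is $C(M/G)$-linear and a homomorphic image of a finitely generated module is finitely generated; your linearity computation is exactly what the paper uses implicitly when it pulls $f^{\mathrm s}_{\alpha,g}$ out of $E^R_{ii}$ in \eref{eq:2.4}. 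The geometry underneath (finite covering, disjoint sheets permuted simply transitively by $G$, extension by zero controlled by supports) is the same in both arguments. What your route buys is modularity and economy: the representation theory decouples completely from the geometry, and the computation \eref{eq:2.4} disappears. What the paper's route buys is the explicit generating set $\{\psi^R_{\alpha,ik}\}$ with its particular index structure, which is not merely a proof device: it is reused in lemma \ref{lema:B}, via the identity \eref{eq:2.6}, to build the projector $P^R$ and establish projectivity, i.e.\ the remaining half of theorem \ref{thm:1}; with your generators $E^R_{ii}\Theta_{\alpha,g}$ one would still have to construct something equivalent at that later stage. One small wording correction: pairwise trivial intersections do not by themselves make a sum of more than two submodules direct; the correct (and, with your ingredients, immediate) argument is to apply $E^{R_0}_{i_0i_0}$ to a vanishing sum and invoke \eref{eq:2.1} together with \eref{eq:2.5}, which is presumably what you intended.
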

\begin{proof}
Let $q:M\rightarrow M/G$ denote the quotient map induced by the (smooth) action $l:G\times M\rightarrow M$.
Since $G$ is finite and the action $l$ is free, it follows that $l$ is properly discontinuous without fixed
points, so the standard theorem \cite{Boothby:02} for quotients of such actions applies, showing that $M/G$
is also a smooth manifold. Following the proof of that theorem, one sees that it is possible to choose charts
for $M$ of the form $\lbrace(\tilde U_{\alpha, g}, \tilde \varphi_{\alpha,g})\rbrace_{\alpha\in I, g\in G}$,
with $I$ finite ($M$ is assumed to be compact) and with the following properties: \numparts
\begin{eqnarray}
  \mbox{For each $\alpha$ in $I$,
    $\tilde U_{\alpha,g}=g \cdot (\tilde U_{\alpha,e})$ for all $g\in G$}.\\
\mbox{For each $\alpha$ in $I$, $\tilde U_{\alpha,g}\cap \tilde U_{\alpha,g'}=\emptyset$ whenever $g\neq g'$;
$g,g'\in G$}.\label{eq:2.8}
\end{eqnarray}
\endnumparts
Here, the notation $l_g(x) =g\cdot x$ for the action has been used. It follows immediately from these
properties that, for a given $\alpha$,  $q(\tilde U_{\alpha,g})= q(\tilde U_{\alpha,e})$ holds,
 for all $g$.
Put now $U_\alpha:=q(\tilde U_{\alpha,g})$, with $g\in G$ arbitrarily chosen. The collection $\lbrace
U_\alpha\rbrace_{\alpha\in I}$ gives an open cover for $M/G$, with the following property:
\begin{equation}
\label{eq:q^-1(U_alpha)}
 q^{-1}(U_\alpha)=\bigcup_{g\in G}\tilde U_{\alpha,g}
                  =\bigcup_{g\in G }g\cdot (\tilde U_{\alpha,e}),
\end{equation}
i.e. the inverse image of $U_\alpha$ is a union of pairwise disjoint neighborhoods. Let now
$\{\phi_\alpha\}_{\alpha\in I}$ be a partition of unity for $M/G$, subordinated to the cover
$\{U_\alpha\}_{\alpha\in I}$
($\hbox{Supp}\,\phi_\alpha \subset U_\alpha\subset M/G$ and with the convention that $\sum_{\alpha \in
I}\phi_\alpha^{\,2}=1$ ).

 Each $\phi_\alpha$ gives place to a pull-back function
$q^*\phi_\alpha\in C^\infty(M)$, defined through $q^*\phi_\alpha:=\phi_\alpha\comp q$. This, in turn, can be
used to define, for each $\alpha$, the following map:
\begin{eqnarray}
\psi_{\alpha}(x)\,:=\,\left\lbrace
\begin{array}{c c}
     q^*\phi_\alpha(x), & \mbox{if}\;\; x\in \tilde U_{\alpha,e}\\
                    0, & \mbox{otherwise}.
\end{array}
\right.
\end{eqnarray}
Notice that these are indeed \emph{smooth} functions,  because of (\ref{eq:2.8}) and the fact that
 $\hbox{Supp} (q^*\phi_\alpha)= q^{-1}(\hbox{Supp}\, \phi_\alpha)\subseteq \bigcup_{g\in G}\tilde U_{\alpha,g}$.
More generally, let us define, for each $f\in C(M)$, the following (continuous) functions ($\alpha\in I, g\in
G$):
\begin{eqnarray}
f_{\alpha,g}(x)\,:=\,\left\lbrace
\begin{array}{c c}
     ((q^*\phi_\alpha) f)(x), & \mbox{if}\;\; x\in \tilde U_{\alpha,g}\\
                    0, & \mbox{otherwise}.
\end{array}
\right.
\end{eqnarray}

Define now, for $f\in C(M)$, the \emph{invariant} functions ($x\in M$)
\begin{equation}
f^{\mathrm{s}}_{\alpha,g}(x):=\sum_{h\in G} f_{\alpha,g}(h^{-1}\cdot x).
\end{equation}
From these definitions we immediately obtain the following identity:
\begin{equation}
\label{eq:ident-f^s} \sum_{g\in G}f^{\mathrm{s}}_{\alpha,g}(x)\psi_{\alpha}(g^{-1}\cdot x) =
\phi_\alpha([x])^2 f(x).
\end{equation}
Summing over $\alpha$, we then get
\begin{equation}
\label{eq:2.2} f(x)=\sum_{\alpha \in I}\phi_\alpha([x])^2 f(x)=\sum_{\alpha\in I}\sum_{g\in
G}f^{\mathrm{s}}_{\alpha,g}(x)\psi_{\alpha}(g^{-1}\tcdot x).
\end{equation}
Assume now that $f\in \mathcal E^R_i$, i.e. that $f=E_{ii}^R f $. Then, using \eref{eq:Proj-E}, \eref{eq:2.2}
and \eref{eq:2.3}, we obtain:
\begin{eqnarray}
f(x) & = & E_{ii}^R f(x)= \Bigl[E_{ii}^R\Bigl(\sum_{\alpha\in I}\sum_{g\in G}
f^{\mathrm{s}}_{\alpha,g}\,g\hspace{-0.05cm}\cdot \hspace{-0.05cm}\psi_{\alpha} \Bigr)\Bigr](x)\nonumber\\
& = &
 \Bigl[\sum_{\alpha\in I}\sum_{g\in G}E_{ii}^R\left(
f^{\mathrm{s}}_{\alpha,g}\,g\tcdot \psi_{\alpha} \right)\Bigr](x)\nonumber\\
&=&
 \sum_{\alpha\in I}\sum_{g\in G}f^{\mathrm s}_{\alpha,g}(x)
\bigl[E_{ii}^R (g\tcdot \psi_{\alpha}) \bigr](x)\nonumber\\
&=&
 \sum_{\alpha\in I}\sum_{g\in G}f^{\mathrm s}_{\alpha,g}(x)
\sum_{k=1}^{n_R}R_{ik}(g) \bigl[ E_{ki}^R \psi_{\alpha} \bigr](x)\nonumber\\
&=&
 \sum_{\alpha\in I}\sum_{k=1}^{n_R}\biggl(\sum_{g\in G}f^{\mathrm s}_{\alpha,g}(x)R_{ik}(g)
 \biggr)\bigl[ E_{ki}^R \psi_{\alpha} \bigr](x).\label{eq:2.4}
\end{eqnarray}
Define now $\psi^R_{\alpha, i k }:= E_{ki}^R \psi_{\alpha}$ (the reason for inverting the order of the
indices will become clear below) and observe that
\begin{itemize}
\item $f^{\mathrm s}_{\alpha,g}$ is invariant under $G$. Therefore, $\sum_{g\in G}f^{\mathrm
s}_{\alpha,g}R_{ik}(g)\in C^{G}(M)\cong C(M/G)$.
\item For each $\alpha\in I$ and each $k\in \lbrace 1,\ldots,n_R\rbrace$, the function $\psi^R_{\alpha, i k }$ belongs
 to $\mathcal E^R_i$. This follows directly from \eref{eq:2.1}.
\end{itemize}
We have therefore proved that $\mathcal E^R_i$ is a finitely generated $C(M/G)$-module since, as
 \eref{eq:2.4} shows, $\{\psi^R_{\alpha, i k}\}_{\alpha\in I,1\leq k\leq n_R}$ constitutes a \emph{finite}
  set of generators for
$\mathcal{E}^R_i$ as a module over $C^G(M)\cong C(M/G)$. The fact that $C(M)=\bigoplus_{R\in \widehat
G}\bigoplus_{i=1}^{n_R}\mathcal E^R_i$ follows directly from \eref{eq:2.5}.
\end{proof}

\begin{lem}
\label{lema:B}  There is, for every $R\in\widehat G$, an integer $N_R$ and a projector $P^R$ such that
$P^R(C(M/G)^{N_R})\cong \mathcal{E}^R_{i}$ for every $i\in{1,\ldots,n_R}$.
\end{lem}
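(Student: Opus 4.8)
The plan is to realise each $\mathcal E^R_i$ as a direct summand of a free $C(M/G)$-module, which is the standard criterion for finite generation and projectivity and which produces the idempotent $P^R$ directly. The key structural remark is that the reconstruction formula \eref{eq:2.4} obtained in lemma \ref{lema:A} is already a \emph{splitting} of the surjection defined by the generators; recognising it as such is the whole content. Since the statement asks for a \emph{single} projector whose image is isomorphic to $\mathcal E^R_i$ for \emph{every} $i$, I would first reduce to one fixed index by showing that, for fixed $R$, all the $\mathcal E^R_i$ are isomorphic as $C(M/G)$-modules.

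\emph{Step 1 (the $\mathcal E^R_i$ are mutually isomorphic).} I would first check that each $E^R_{ij}$ is a $C(M/G)$-module homomorphism: for $h\in C^{\mathrm G}(M)$ the $G$-invariance of $h$ lets it be pulled out of the averaging in \eref{eq:Proj-E}, giving $E^R_{ij}(hf)=h\,E^R_{ij}f$. Then, reading \eref{eq:2.1} in the form $E^R_{ab}E^R_{cd}=\delta_{ad}E^R_{cb}$, one finds $E^R_{jj}E^R_{ij}=E^R_{ij}$ and $E^R_{ij}E^R_{ii}=E^R_{ij}$, so $E^R_{ij}$ carries $\mathcal E^R_i$ into $\mathcal E^R_j$, while $E^R_{ji}E^R_{ij}=E^R_{ii}$ and $E^R_{ij}E^R_{ji}=E^R_{jj}$ act as the identity on $\mathcal E^R_i$ and $\mathcal E^R_j$. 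Hence $E^R_{ij}$ and $E^R_{ji}$ are mutually inverse module isomorphisms $\mathcal E^R_i\cong\mathcal E^R_j$, and it suffices to treat a single fixed $i$.

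\emph{Step 2 (the reconstruction formula as a section).} Fixing $i$, I would recall from lemma \ref{lema:A} the finite generating family $\{\psi^R_{\alpha,ik}\}_{\alpha\in I,\,1\le k\le n_R}$, set $N_R:=|I|\,n_R$, and define two $C(M/G)$-linear maps: the evaluation surjection $\sigma\colon C(M/G)^{N_R}\to\mathcal E^R_i$ sending $(\lambda_{\alpha,k})$ to $\sum_{\alpha,k}\lambda_{\alpha,k}\,\psi^R_{\alpha,ik}$, and the coefficient map $\tau\colon\mathcal E^R_i\to C(M/G)^{N_R}$ sending $f$ to the tuple with entries $\sum_{g\in G}f^{\mathrm s}_{\alpha,g}\,R_{ik}(g)$. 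Here $\tau$ really lands in $C(M/G)^{N_R}$ because each $f^{\mathrm s}_{\alpha,g}$ is $G$-invariant, and $\tau$ is module-linear because $(hf)^{\mathrm s}_{\alpha,g}=h\,f^{\mathrm s}_{\alpha,g}$ for invariant $h$, again by pulling $h$ out of the orbit sum. The decisive observation is that \eref{eq:2.4} reads exactly $\sigma\comp\tau=\mathrm{id}_{\mathcal E^R_i}$, so $\tau$ is a section of $\sigma$.

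\emph{Step 3 (conclusion).} I would then set $P^R:=\tau\comp\sigma$, a $C(M/G)$-linear endomorphism of the free module $C(M/G)^{N_R}$ satisfying $(P^R)^2=\tau(\sigma\comp\tau)\sigma=\tau\comp\sigma=P^R$, so $P^R$ is a projector (representable as an $N_R\times N_R$ idempotent matrix over $C(M/G)$). Because $\sigma$ is surjective and $\tau$ injective (both consequences of $\sigma\comp\tau=\mathrm{id}$), $\tau$ restricts to an isomorphism $\mathcal E^R_i\to\tau(\mathcal E^R_i)=P^R\bigl(C(M/G)^{N_R}\bigr)$; combining this with Step 1 yields $P^R\bigl(C(M/G)^{N_R}\bigr)\cong\mathcal E^R_i$ for every $i$. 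I expect the only genuinely delicate points to be the two module-linearity verifications (both reducing to invariance of the averaged functions $f^{\mathrm s}_{\alpha,g}$) and the identification of \eref{eq:2.4} as a splitting; idempotency and projectivity follow formally once those are in place.
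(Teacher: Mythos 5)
Your proof is correct, and it reaches the lemma by a genuinely different route from the paper's. The paper's proof is explicit and computational: it defines $P^R$ outright as the Gram-type matrix \eref{eq:2.7}, whose block entries are $(P^R_{\alpha\beta})_{jk}=\frac{|G|^2}{n_R^2}\sum_{l}\overline{\psi^R_{\alpha,lj}}\,\psi^R_{\beta,lk}$, checks $G$-invariance of these entries and $(P^R)^2=P^R$, and then proves $P^R(C(M/G)^{N_R})\cong\mathcal E^R_i$ by sending the column labelled $(\alpha,k)$ to the generator $\psi^R_{\alpha,ik}$ and verifying, via the identity \eref{eq:2.6}, that the two sets of generators satisfy the same relations; one and the same matrix thus serves every $i$ simultaneously. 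You instead read the reconstruction formula \eref{eq:2.4} abstractly as a split surjection, $\sigma\circ\tau=\mathrm{id}_{\mathcal E^R_i}$, and take $P^R=\tau\circ\sigma$, so idempotency and the isomorphism $P^R(C(M/G)^{N_R})=\tau(\mathcal E^R_i)\cong\mathcal E^R_i$ come for free; the ``for every $i$'' clause is then supplied by your Step 1, which the paper never states but which is correct: \eref{eq:2.1} does give $E^R_{ji}E^R_{ij}=E^R_{ii}$ and $E^R_{ij}E^R_{ji}=E^R_{jj}$, so the $E^R_{ij}$ restrict to mutually inverse $C(M/G)$-module isomorphisms between $\mathcal E^R_i$ and $\mathcal E^R_j$. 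The trade-off is this: your argument bypasses the one genuinely delicate step of the paper's proof (a bijection of generators of projective modules does not by itself give an isomorphism, so the paper must check that all relations match), replacing it with the formal fact that the idempotent attached to a split surjection has image isomorphic to the target; the paper's construction, on the other hand, produces a manifestly self-adjoint (hence orthogonal) projector with completely explicit, $i$-independent matrix entries, which is what one wants for the concrete identifications with the Berry--Robbins and Paschke projectors discussed in section \ref{sec:3} and for respecting the Hermitian structure. Both arguments consume exactly the same input from lemma \ref{lema:A}: the generators $\psi^R_{\alpha,ik}$, the invariant coefficients $\sum_{g}f^{\mathrm s}_{\alpha,g}R_{ik}(g)$, and formula \eref{eq:2.4}.
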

\begin{proof}
Keeping the same conventions and notation as in the previous lemma, set $N_R=n_R |\,I|$ and  define a linear
map $P^R:C(M)^{N_R}\rightarrow C(M)^{N_R}$ as follows. The free module $C(M)^{N_R}$ is a direct sum of $N_R$
copies of $C(M)$. Therefore, every element $F$ of $C(M)^{N_R}$ can be written as an $N_R$-tuple
$F\equiv(F_{\alpha,j})$, with $\alpha\in I$ and $1\leq j\leq n_R$. Define $P^R F$ as the $N_R$-tuple whose
$(\alpha,j)$ component is given by
\begin{equation}
\label{eq:2.7} (P^R F)_{\alpha,j}:=\frac{|G|^2}{(n_R)^2}\sum_{\beta\in
I}\sum_{k,l=1}^{n_R}\overline{\psi^R_{\alpha, l j}(x)} \psi^R_{\beta, l k}(x) F_{\beta, k}(x),
\end{equation}
with the bar denoting complex conjugation. It is useful to regard $P^R$ as a $C(M)$-valued $N_R\times N_R$
(block) matrix.  Doing so, one sees that  the component $(j,k)$ corresponding to the block $(\alpha,\beta)$
must be given by
\begin{equation}
(P^R_{\alpha \beta}(x))_{jk}= \frac{|G|^2}{(n_R)^2}\sum_{l=1}^{n_R}\overline{\psi^R_{\alpha, l j}(x)}
\psi^R_{\beta, l k}(x).
\end{equation}
A straightforward  calculation shows that $(P^R)^2=P^R$. This means that $P^R(C(M)^{N_R})$ is a projective
module over $C(M)$. But it turns out that,  for any $g\in G$, we have $(P^R_{\alpha \beta}(g\tcdot x))_{jk}
=(P^R_{\alpha \beta}(x))_{jk}$. Therefore, we can use the isomorphism $C^G(M)\cong C(M/G)$ in order to obtain
a projective module over $C(M/G)$, by letting  $P^R$ act on $C(M/G)^{N_R}$. As shown below, this module is
isomorphic to $\mathcal E^R_i$. First let us remark that the module $P^R(C(M/G)^{N_R})$ is generated by the
columns of $P^R$, considered as $C(M/G)$-valued vectors. Let us now map the column of $P^R$ that is labeled
by the indices $(\alpha,k)$ to the generator $\psi^R_{\alpha,ik}$ of $\mathcal E^R_i$. Since both modules are
projective, it is not enough to give a bijection between the sets of generators to obtain a module
isomorphism: Both sets of generators must satisfy the same relations. The relations satisfied by the
generators of $P^R(C(M/G)^{N_R})$ are obtained from the condition $(P^R)^2=P^R$. On the other hand, using
\eref{eq:2.1} we obtain the following identity:
\begin{equation}
\label{eq:2.6} \sum_{\alpha\in I}\sum_{l=1}^{n_R}\psi^R_{\alpha,k'l}
\overline{\psi^R_{\alpha,kl}}=\frac{n_R^2}{|G|^2}\delta_{k',k}.
\end{equation}
Using this identity, together with \eref{eq:2.7}, we get:
\begin{equation}
\sum_{\alpha\in I}\sum_{k=1}^{n_R}\big(P^R_{\alpha\beta}([x]) \big)_{kj}\overline{\psi^R_{\alpha,ik}(x)} =
\overline{\psi^R_{\beta,ij}(x)}.
\end{equation}
This means that the generators of $\mathcal E^R_i$ satisfy the same relations as the columns of $P^R$.
Therefore, the $C(M/G)$-linear extension of the map used to identify the generators gives the desired
isomorphism.
\end{proof}
Let us make a few remarks about the meaning of this result. Consider a principal bundle $(G,P,M)$ with total
space $P$, structure group $G$ and base space $M$. Let $\xi= P\times_G V$ denote a vector bundle with fibre
$V$, associated to $P$ through some representation of $G$ on $V$. There is a well known theorem according  to
which every section $s$ of $\xi$ can represented by a vector valued function $\chi_s:P\rightarrow V$ which is
equivariant with respect to the $G$-actions on $P$ and $V$. Conversely, every $G$-equivariant map from $P$ to
$V$ represents some section of the bundle $\xi$. The result we have obtained is similar to that theorem in
the sense that the triple $(G, M, M/G)$ can be regarded as a principal bundle. Sections of a vector bundle
associated to this principal bundle through a representation $R$ of $G$ on  a vector space $V$ can therefore
be regarded as equivariant maps from $M$ to $V$. If $\dim V>1$, these maps will necessarily be vector valued.
What theorem \ref{thm:1} states is that these \emph{same} sections can in fact be represented by
\emph{scalar} (i.e. complex valued) functions on $M$. At first sight, it might appear as counterintuitive to
claim that a section of a bundle of rank higher than one over $M/G$ can be represented by a scalar function
on $M$, but as the proof of the theorem shows, this is indeed the case. According to a theorem of Milnor
\cite{Milnor1957}, every flat vector bundle over $M/G$ is associated to the principal bundle $(G, M, M/G)$
through some representation of $G$. From this point of view, theorem \ref{thm:1} says that \emph{every}
section of any flat vector bundle over $M/G$ can be represented by  a complex function on $M$. The fact that
these bundles are flat is reflected in the fact that the connections become, basically, the exterior
derivative on $M$. From a practical point of view, this could be a very convenient way of working with such
flat bundles. Potential applications of these techniques to quantization problems, Berry phases and spin
systems will be discussed in the last section.

\subsection{Invariant sections}
Pull-backs of vector bundles can be neatly expressed in terms of tensor products between  modules of sections
and algebras of functions~\cite{Gracia-Bond'ia2001}. This algebraic description  takes a particularly
interesting form when applied to pull-backs of quotients by finite groups, a form that turns out to be
essential for discussions of quantum indistinguishability~\cite{Papadopoulos2004,Papadopoulos2010}. Let us
start this section  recalling some basic mathematical facts.

\begin{definition} A vector bundle  $\xi=(E(\xi),\pi,M)$ over the  $G$-space $M$ is called a $G$-bundle when the following
conditions hold:
\begin{itemize}
\item The total space $E(\xi)$ is itself a $G$-space  (the corresponding
action being denoted with $\tau$).
\item The projection $\pi$ is $G$-equivariant, i.e.
$\pi \comp \tau_g = \rho_g \comp \pi$ for all $g$ in $G$.

\item The restriction
 $\tau_g\big|_{\pi^{-1}(m)} :\pi^{-1}(m)\longrightarrow\pi^{-1}(g\cdot m)$
of the action to the fibers is a vector space isomorphism.
\end{itemize}
A morphism between two $G$-bundles is a $G$-equivariant bundle morphism. The notation $\xi_1 \cong_G \xi_2$
will be used whenever $\xi_1$ and
 $\xi_2$ are equivalent as  $G$-bundles.
\end{definition}
For $G$ finite, we have:
\begin{thm}[cf.\cite{Atiyah1967}]
\label{prop:Atiyah} If $M$ is   $G$-free, there is a bijective correspondence between
 $G$-bundles over  $M$ and  bundles over $M/G$ by $\eta\rightarrow \eta/G$.
\end{thm}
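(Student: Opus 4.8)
The plan is to exhibit the inverse of $\eta\mapsto\eta/G$ explicitly as the pullback along the quotient map $q\colon M\to M/G$, and then to check that the two assignments are mutually inverse up to ($G$-)isomorphism. In the forward direction, given a $G$-bundle $\eta=(E(\eta),\pi,M)$, I first observe that the $G$-action $\tau$ on $E(\eta)$ is itself free: if $\tau_g(v)=v$ then $g\cdot\pi(v)=\pi(v)$ by equivariance of $\pi$, whence $g=e$ since $G$ acts freely on $M$. Hence $E(\eta)/G$ is again a Hausdorff space (the action of the finite group $G$ is free and proper), $\pi$ descends to a map $\bar\pi\colon E(\eta)/G\to M/G$, and I set $\eta/G:=(E(\eta)/G,\bar\pi,M/G)$. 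In the backward direction, for a bundle $\zeta=(E(\zeta),\pi_\zeta,M/G)$ over $M/G$, I form the pullback $q^{*}\zeta=\{(m,v)\in M\times E(\zeta)\mid q(m)=\pi_\zeta(v)\}$ and equip it with the $G$-action $\tau_g(m,v):=(g\cdot m,v)$; this is well defined because $q(g\cdot m)=q(m)$, the projection to $M$ is manifestly $G$-equivariant, and each $\tau_g$ restricts to the identity on fibers, which is a linear isomorphism. Thus $q^{*}\zeta$ is a $G$-bundle.

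The core of the argument will be to verify that $\eta/G$ is a \emph{genuine} locally trivial vector bundle, and here I would use the good cover constructed in the proof of Lemma~\ref{lema:A}. Fix the cover $\{U_\alpha\}_{\alpha\in I}$ of $M/G$ with $q^{-1}(U_\alpha)=\bigsqcup_{g\in G}g\cdot\tilde U_{\alpha,e}$ as in \eref{eq:q^-1(U_alpha)}, and trivialize $\eta$ over the slice $\tilde U_{\alpha,e}$ as $\tilde U_{\alpha,e}\times V$. Since $q$ restricts to a homeomorphism $\tilde U_{\alpha,e}\to U_\alpha$, and every $G$-orbit in $\pi^{-1}(q^{-1}(U_\alpha))$ meets $\pi^{-1}(\tilde U_{\alpha,e})$ in exactly one point (by freeness), the inclusion of $\pi^{-1}(\tilde U_{\alpha,e})$ into $E(\eta)$ followed by the quotient projection onto $E(\eta)/G$ is a homeomorphism onto $\bar\pi^{-1}(U_\alpha)$ that is linear on fibers. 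This yields a local trivialization $\bar\pi^{-1}(U_\alpha)\cong U_\alpha\times V$, and the induced transition functions are precisely the descent of those of $\eta$, so $\eta/G$ is a vector bundle of rank $\dim V$.

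It then remains to show that the two constructions are mutually inverse. For $q^{*}(\eta/G)\cong_G\eta$, I would define $\Phi\colon E(\eta)\to E(q^{*}(\eta/G))$ by $v\mapsto(\pi(v),[v])$; freeness guarantees that over each $m$ the class $[v]$ has a unique representative in $\pi^{-1}(m)$, so $\Phi$ is a fiberwise-linear bijection, it is $G$-equivariant by construction, and it is a homeomorphism by the local triviality just established. For $(q^{*}\zeta)/G\cong\zeta$, the $G$-orbit of $(m,v)$ is $\{(g\cdot m,v)\}_{g\in G}$, so $[(m,v)]\mapsto v$ is well defined and descends to a bundle isomorphism over the identity of $M/G$, continuity of the inverse again being read off from the slice trivializations. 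I expect the \textbf{main obstacle} to be the local-triviality step of the second paragraph: everything else is essentially formal once one knows $\eta/G$ is a bundle, but establishing local triviality is exactly where freeness of the action and the special cover of Lemma~\ref{lema:A} are indispensable, since without them the quotient need not be locally a product.
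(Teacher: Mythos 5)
Your proposal is correct in substance, but note first that the paper itself does not prove this theorem: it is explicitly listed (together with theorem \ref{thm:pull-back-alg}) among the well-known results included only for completeness, with a citation to \cite{Atiyah1967}, and the text following the statement merely unpacks its meaning --- the inverse correspondence is the pull-back $q^*$ equipped with the action \eref{eq:g(m,y)}, and the two isomorphisms $q^*\xi/G\cong\xi$ and $q^*(\eta/G)\cong_G\eta$ are asserted without argument. Your proof follows exactly that outline and supplies the verifications the paper omits: freeness of the induced action on $E(\eta)$, descent of $\pi$ to $\bar\pi$, local triviality of $\eta/G$ via the slice cover \eref{eq:q^-1(U_alpha)}, and the explicit mutual-inverse maps $v\mapsto(\pi(v),[v])$ and $[(m,v)]\mapsto v$. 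You are also right in identifying local triviality of the quotient as the only non-formal step; the rest is bookkeeping once that is in place.

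One small patch is needed at precisely that step: you trivialize $\eta$ over the slice $\tilde U_{\alpha,e}$, but the slices constructed in the proof of lemma \ref{lema:A} are chart domains for $M$ and need not be trivializing neighbourhoods for an arbitrary $G$-bundle $\eta$. The fix is routine: any open subset of a slice is again a slice (its $G$-translates remain pairwise disjoint), so one refines the cover by intersecting each $\tilde U_{\alpha,e}$ with trivializing neighbourhoods of $\eta$ and pushing down by the open map $q$, keeping the cover finite by compactness. With that adjustment your argument goes through: the composite $\pi^{-1}(\tilde U_{\alpha,e})\hookrightarrow E(\eta)\rightarrow E(\eta)/G$ is a continuous open bijection onto $\bar\pi^{-1}(U_\alpha)$ (open because quotient maps by group actions are open), hence a homeomorphism, and the fibrewise linear structure on $\eta/G$ is well defined since distinct orbit representatives are related by the linear maps $\tau_g$.
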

The precise meaning of this theorem is the following. Let
 $q:M\rightarrow M/G$ denote the canonical projection and let $\xi$ be a vector bundle over $M/G$.
  Then, the pull-back $q^*$ induces -in a natural way- an action $\tau_\xi$ of $G$ on $E(q^*\xi)$, given by
\begin{equation}
\label{eq:g(m,y)} \tau_\xi(g,(m,y))=g\cdot(m,y):=(g\cdot m,y),\;\; g\in G,\;\; (m,y)\in E(q^*\xi)
\end{equation}
This action is also free, implying that the quotient $E(q^*\xi)/G$ is also a manifold. One then shows that
$E(q^*\xi)/G$ is the total space of a vector bundle (denoted $q^*\xi/G$) over $M/G$, which is isomorphic to
$\xi$: $q^*\xi/G\cong\xi$. On the other hand, let $\eta$ be a $G$-bundle over $M$. The quotient $E(\eta)/G$
is -again- the total space of a bundle, but this time  over $M/G$. Its  pull-back turns out to be
$G$-isomorphic to $\eta$: $q^*(\eta/G)\cong_G \eta$.

Consider now a continuous map $\phi: M\rightarrow N$ and a vector bundle $\xi$ over $N$.
Noticing that $\phi$ induces a ring homomorphism $\phi^*:C(N)\rightarrow  C(M)$ (through $f \mapsto
\phi^*f:=f\comp\phi$), we obtain a $C(N)$-module structure on $\Gamma(\phi^*\xi)$:
\begin{eqnarray}\label{eq:1}
C(N)\times \Gamma(\phi^*\xi) & \longrightarrow &   \Gamma(\phi^*\xi)\\
\hspace{0.6cm}(\,f\,,\,s\,)& \longmapsto & f\cdot s:= (\phi^*f)s.\nonumber
\end{eqnarray}
If for any given section $\sigma\in \Gamma(\xi)$ we define a new one  by
\begin{eqnarray}
\label{eq:2}
\phi^* \sigma:M & \longrightarrow & \hspace{0.5cm}E(\phi^*\xi)\\
       \hspace{1.1cm}     x & \longmapsto     & \;\;(x,\sigma\comp\phi(x)),\nonumber
\end{eqnarray}
then we obtain the following  homomorphism of $C(N)$-modules:
\begin{eqnarray}
\label{eq:F}
F^\phi:\Gamma(\xi)& \longrightarrow & \Gamma(\phi^*\xi)\\
   \hspace{1.1cm}\sigma & \longmapsto & F^\phi(\sigma)\equiv\phi^*\sigma. \nonumber
\end{eqnarray}
$F^\phi$ is clearly a $C(N)$-linear map:
\[F^\phi(f\cdot\sigma)=\phi^*(f\cdot\sigma)\stackrel{(\ref{eq:2})}=
(f\comp\phi)\phi^*\sigma
  \stackrel{(\ref{eq:1})}=f\cdot \phi^*\sigma=f\cdot F^\phi(\sigma) .\]
\begin{remark}
\label{rem:generadores} It is important to remark that $F^\phi$ is not, in general, an isomorphism. Indeed,
although we may choose generators  for $\Gamma(\phi^*\xi)$ of the form $\sigma^\prime_i = F^\phi(\sigma_i)$,
we see from $\mbox{Im}(F^\phi)=\{\sum_i(f_i\mbox{\small$\comp$}\phi)\sigma^\prime_i \,|\, f_i\in C(N)\}$ that
$F^\phi$ is not surjective in general, because the elements in the image module are only linear combinations
of the generators over the \emph{subspace} $\phi^*(C(N))$ of $C(M)$. In other words: In order to remain
inside $\mbox{Im}(F^\phi)$, the generators $\sigma^\prime_i$ may  be multiplied only by elements of $C(N)$.
\end{remark}
The previous remark suggests that we might obtain $\Gamma(\phi^*\xi)$ from $\mbox{Im}(F^\phi)$ if we are able
to replace, somehow, $C(N)$ by $C(M)$. This change of ring can in fact be performed, with the help of the
tensor product, because both $C(M)$ and $\Gamma (\xi)$ can be regarded as $C(N)$-modules. One can therefore
define the following homomorphism of $C(M)$-modules:
\begin{eqnarray}
\label{eq:3}
\Phi: C(M)\otimes_{C(N)}\Gamma(\xi) & \longrightarrow &  \hspace{0.3cm}\Gamma(\phi^*\xi)\\
\hspace{0.90cm}\sum_k a_k\otimes\sigma_k \hspace{1cm}          & \longmapsto & \sum_k a_k
F^\phi(\sigma_k).\nonumber
\end{eqnarray}
This is the key idea behind the following  (well-known) result.
\begin{thm}[cf.~\cite{Gracia-Bond'ia2001,Madsen1997}]
\label{thm:pull-back-alg} The map defined through (\ref{eq:3}) provides an isomorphism\linebreak
 $C(M)\otimes_{C(N)}\Gamma(\xi)\cong \Gamma(\phi^*\xi)$ of $C(M)$-modules.
\end{thm}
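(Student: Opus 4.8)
The map $\Phi$ of (\ref{eq:3}) is well defined and $C(M)$-linear. Well-definedness amounts to checking that $\Phi$ respects the defining relation of the tensor product over $C(N)$, namely $(\phi^*f)\,a\otimes\sigma = a\otimes f\cdot\sigma$ for $f\in C(N)$; this is immediate from the $C(N)$-linearity $F^\phi(f\cdot\sigma)=(\phi^*f)F^\phi(\sigma)$ established just above, and $C(M)$-linearity is built into the definition. The plan is therefore to prove that $\Phi$ is bijective. Rather than attempt this directly, I would reduce to the case of a trivial bundle, exploiting the fact that, since $N$ is a compact manifold, the Serre-Swan theorem guarantees that $\Gamma(\xi)$ is a finitely generated \emph{projective} $C(N)$-module.

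First I would dispose of the trivial case. If $\xi = N\times\C^n$ is the trivial bundle of rank $n$, then $\Gamma(\xi)\cong C(N)^n$, the pull-back $\phi^*\xi = M\times\C^n$ is again trivial, so $\Gamma(\phi^*\xi)\cong C(M)^n$, and there is a canonical isomorphism $C(M)\otimes_{C(N)}C(N)^n\cong C(M)^n$ descending from $C(M)\otimes_{C(N)}C(N)\cong C(M)$. A direct check on the standard generators $e_i$ (the constant sections) shows that $\Phi(a\otimes e_i)= a\,F^\phi(e_i)$ is precisely $a$ times the $i$-th constant section of $\phi^*\xi$; hence $\Phi$ coincides with this canonical isomorphism and is, in particular, bijective for trivial $\xi$.

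For the general case I would use additivity. By Serre-Swan there is a complementary bundle $\eta$ over $N$ with $\xi\oplus\eta\cong N\times\C^n$ for some $n$, i.e. $\Gamma(\xi)\oplus\Gamma(\eta)\cong C(N)^n$. Both constructions entering $\Phi$ respect direct sums: the functor $C(M)\otimes_{C(N)}(-)$ preserves finite direct sums, and pull-back satisfies $\phi^*(\xi\oplus\eta)\cong\phi^*\xi\oplus\phi^*\eta$, so that $\Gamma(\phi^*(\xi\oplus\eta))\cong\Gamma(\phi^*\xi)\oplus\Gamma(\phi^*\eta)$. Moreover $\Phi$ is natural in $\xi$ (it is built from the functorial assignment $\sigma\mapsto F^\phi(\sigma)=\phi^*\sigma$), so under these identifications $\Phi_{\xi\oplus\eta}$ decomposes as $\Phi_\xi\oplus\Phi_\eta$. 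Since $\xi\oplus\eta$ is trivial, $\Phi_{\xi\oplus\eta}$ is an isomorphism by the previous paragraph, and a direct sum of homomorphisms respecting the decomposition is an isomorphism precisely when each summand is; hence $\Phi_\xi$ is an isomorphism, as required.

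The main obstacle is not any single computation but the bookkeeping needed to make the reduction rigorous: one must verify that $\Phi$ is genuinely natural with respect to bundle morphisms, so that the splitting $\xi\oplus\eta\cong N\times\C^n$ transports $\Phi_{\xi\oplus\eta}$ to $\Phi_\xi\oplus\Phi_\eta$, and that the additive identifications on both sides are compatible with $\Phi$. These steps are formal but must be handled with care; the compactness of $N$, needed to invoke Serre-Swan and thereby the existence of the complement $\eta$, is the essential hypothesis that makes the projective-module argument available. An alternative, more hands-on route would establish injectivity and surjectivity directly using a finite trivializing cover of $N$ together with a subordinate partition of unity, writing any section of $\phi^*\xi$ as a $C(M)$-combination of pulled-back local sections; this bypasses Serre-Swan but is computationally heavier, so I would prefer the projective-module argument.
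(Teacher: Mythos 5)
Your proof is correct, but note that the paper itself does not prove this theorem: it is stated as a well-known result with a citation to Gracia-Bond\'ia et al.\ and Madsen--Tornehave, the paper's own contribution being only the construction of the map $\Phi$ in (\ref{eq:3}) and the motivating Remark \ref{rem:generadores}. Your argument---check well-definedness via the $C(N)$-balanced property, verify the trivial-bundle case on the constant sections, then split $\xi$ off a trivial bundle using the fact that $\Gamma(\xi)$ is finitely generated projective over $C(N)$ (equivalently, that every bundle over a compact base admits a complement), and transport the conclusion through additivity and naturality of $\Phi$---is essentially the standard proof given in those cited references, and the two points you flag as requiring care (naturality of $\Phi$ under bundle morphisms, so that $\Phi_{\xi\oplus\eta}$ really decomposes as $\Phi_\xi\oplus\Phi_\eta$, and the fact that a block-diagonal isomorphism has block-diagonal inverse) are indeed the only nontrivial bookkeeping; both go through as you indicate. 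The alternative partition-of-unity route you mention is also viable and is closer in spirit to the explicit generator manipulations the paper carries out in Lemmas \ref{lema:A} and \ref{lema:B}, but for this particular theorem the projective-module reduction is the cleaner and the standard choice.
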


If -for the situation considered  in the previous paragraphs- we set $N\equiv M/G$ and $\phi\equiv
q:M\rightarrow M/G$ in theorem  \ref{thm:pull-back-alg}, then we can construct an injective $C(M/G)$-module
homomorphism $\Phi^G:\Gamma(\xi)\hookrightarrow \Gamma(q^*\xi)$, as follows.  The decomposition
(\ref{eq:2.9}) of $C(M)$ into $C(M/G)$-submodules allows us to write  $C(M)$  in the form $C(M)=C^G(M)\oplus
\mathcal{E}$, where as already remarked  $C^G(M)\cong C(M/G)$ holds,  and with $\mathcal{E}$ having the
structure of a projective $C(M/G)$-module. Hence  we obtain:
\begin{equation}
\label{eq:2.10}
 C(M)\otimes_{C(M/G)}\Gamma(\xi) \cong \Gamma(\xi)\oplus
                    \left( \mathcal{E}\otimes_{C(M/G)}\Gamma(\xi)\right).
\end{equation}
Denote with $i:\Gamma(\xi)\hookrightarrow C(M)\otimes_{C(M/G)}\Gamma(\xi)$  the inclusion induced by
(\ref{eq:2.10}). Making use of theorem \ref{thm:pull-back-alg} we   can define $\Phi^G:=\Phi\comp\, i$ and in
that way obtain the desired result.
\begin{remark}
\label{rem:remark-importante} It is important to notice that, although $\Phi^G(\Gamma(\xi))$ and
$\Gamma(\xi)$ are isomorphic as $C(M/G)$-modules, $\Phi^G(\Gamma(\xi))$ is actually contained in
$\Gamma(q^*\xi)$. This means that, although every section from $\Gamma(\xi)$ can be ``replaced'' by one from
$\Phi^G(\Gamma(\xi))$, sections from $\Phi^G(\Gamma(\xi))$ may \emph{only} be multiplied by functions in
$C^G(M)$, if we want to identify $\Phi^G(\Gamma(\xi))$ and $\Gamma(\xi)$ as \emph{modules}.
\end{remark}
\begin{remark}
From  (\ref{eq:3}) and (\ref{eq:2.10}) we see that
$\Phi^G(\sigma)=F^q(\sigma)$.
\end{remark}
It is possible to give a more explicit description of the image of $\Phi^G$. In fact, one finds that
$\Phi^G(\Gamma(\xi))$ equals the space of invariant sections of the pull-back bundle. The next result is very
important in connection to our discussion of the single-valuedness condition of Berry and Robbins
(cf.~\cite{Papadopoulos2004,Papadopoulos2010}).
\begin{thm}
\label{thm:inv-sections} $\Gamma(\xi)\cong\Phi^G(\Gamma(\xi))=\Gamma^{G}(q^*\xi)
                  :=\{s\in \Gamma(q^*\xi) \;|\; g\cdot s = s
                                 \;\,\forall\,g\in \Gamma\}.$
\end{thm}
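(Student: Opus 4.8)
The plan is to note first that the isomorphism $\Gamma(\xi)\cong\Phi^G(\Gamma(\xi))$ is already secured: $\Phi^G=\Phi\comp\,i$ is the composition of the inclusion $i$ with the isomorphism $\Phi$ of theorem~\ref{thm:pull-back-alg}, hence injective, so it restricts to an isomorphism onto its image. The substance of the statement therefore reduces to the \emph{set} equality $\Phi^G(\Gamma(\xi))=\Gamma^G(q^*\xi)$, which I would prove by double inclusion. The decisive simplification is the preceding remark, $\Phi^G(\sigma)=F^q(\sigma)=q^*\sigma$, which identifies the image of $\Phi^G$ with the set of pull-back sections $q^*\sigma$, $\sigma\in\Gamma(\xi)$. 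As a preliminary step I would make the induced $G$-action on $\Gamma(q^*\xi)$ explicit: starting from $\tau_\xi$ in \eref{eq:g(m,y)} and the convention $(g\cdot s)(x):=\tau_g\bigl(s(g^{-1}\tcdot x)\bigr)$ (consistent with the action on functions used throughout this section), a one-line computation gives $(g\cdot s)(x)=(x,\tilde s(g^{-1}\tcdot x))$ for a section written as $s(x)=(x,\tilde s(x))$. Thus $s$ is $G$-invariant exactly when its fibre part $\tilde s$ is constant on the orbits of $G$.

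The inclusion $\Phi^G(\Gamma(\xi))\subseteq\Gamma^G(q^*\xi)$ is then immediate: by \eref{eq:2} the fibre part of $q^*\sigma$ is $x\mapsto\sigma(q(x))$, and since $q(g^{-1}\tcdot x)=q(x)$ for every $g\in G$, this fibre part is $G$-invariant, so $q^*\sigma\in\Gamma^G(q^*\xi)$ by the characterization just obtained.

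For the reverse inclusion I would take $s\in\Gamma^G(q^*\xi)$ and use invariance to descend it. Writing $s(x)=(x,\tilde s(x))$, invariance forces $\tilde s(g\tcdot x)=\tilde s(x)$, so $\tilde s$ is constant on each fibre of $q$ and factors uniquely as $\tilde s=\sigma\comp q$ for some map $\sigma:M/G\to E(\xi)$. It remains to check that $\sigma$ is a genuine continuous section of $\xi$: it is a section because $\tilde s(x)$ lies in the fibre of $\xi$ over $q(x)$, which forces $\sigma([x])$ into the fibre over $[x]$; and then $q^*\sigma(x)=(x,\sigma(q(x)))=(x,\tilde s(x))=s(x)$, giving $s=q^*\sigma=\Phi^G(\sigma)$.

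The one point requiring genuine care is the continuity of the descended section $\sigma$. This is handled by the universal property of the quotient topology: since $q:M\to M/G$ is an (open) quotient map --- a fact that holds because $G$ is finite and acts freely and properly discontinuously, exactly as exploited in lemma~\ref{lema:A} --- the continuity of $\sigma\comp q=\tilde s$ implies that of $\sigma$. Everything else is routine bookkeeping with $\tau_\xi$ and the defining formula \eref{eq:2}, so this continuity check is the only real obstacle, and it is a mild one.
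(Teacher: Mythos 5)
Your proof is correct and takes essentially the same route as the paper's: the forward inclusion is established by computing the induced $G$-action on pull-back sections $q^*\sigma$ via \eref{eq:g(m,y)}, and the reverse inclusion by descending the fibre part of an invariant section through $q$ to define $\sigma([x]):=\tilde s(x)$. The only difference is that you explicitly verify continuity of the descended section using the quotient-map property of $q$, a point the paper's proof leaves implicit.
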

\begin{proof}[Proof]
The first equality is clear, since $\Phi^G$ is an injective homomorphism. Every section in
$\Phi^G(\Gamma(\xi))$ is of the form $q^*\sigma$, with $\sigma\in \Gamma (\xi)$. From the definition of
$q^*\sigma$ (see Eq. (\ref{eq:2})) and from the form of the action $\tau$ induced induced on $q^*\xi$ by $q$
(see Eq.(\ref{eq:g(m,y)})) it follows that a section of the form $q^*\sigma$ is invariant:
\begin{eqnarray*}
\left( g\cdot q^*\sigma\right)(x)
              & = & \tau_g(g^{-1}\cdot x,\sigma\comp \,q(g^{-1}\cdot x))\\
            &\stackrel{(\ref{eq:g(m,y)})}{=}& (x,\sigma\comp\, q(g^{-1}\cdot x))\\
            & = & (x, \sigma\comp \,q(x)  )\\
            & = & q^*\sigma(x).
\end{eqnarray*}
Conversely, every invariant section must be of the form $q^*\sigma$: Given $s\in \Gamma(q^*\xi)$, there is a
continuous map $y:M \rightarrow E(\xi)$ with $\pi\comp y = q$ and $s:x\mapsto (x,y(x))$. If $s$ is invariant,
then  $y(g\cdot x)= y(x)$ holds for all $g$ in $G$, so that one can define  a section $\sigma\in\Gamma(\xi)$
through $\sigma([x]):=y(x)$, for which $s=q^*\sigma$ holds.
\end{proof}

\begin{remark} At this point it might not be very clear what kind of advantage can we obtain from describing all these geometric objects in terms of algebraic ones. In the end, as long as we stay within classical geometry, the theorems of Gelfand-Naimark and of Serre-Swan assert that there a complete equivalence, at the level of categories. There are two main reasons for insisting in this approach: (i) at some point one would like to make contact with the (more algebraic) language of quantum field theory, in order to try to interpret the spin-statistics theorem with the geometry of the configuration space (whether this is possible or not is a different issue) and (ii) we have already had the experience, working on a geometric interpretation of the Schwinger construction of Berry-Robbins, that the algebraic approach is really worth pursuing: In that case, the whole intricacy of the Schwinger construction reduces to the assertion that the spin basis is represented by the $C(\mathbb R P^2)$-projective module $A_+\oplus A_-^3$, where $A\pm$ stands for the space of continuous even/odd functions over the 2-sphere. As shown in \cite{Paschke2001,Papadopoulos2004}, this can be obtained from general symmetry considerations.
\end{remark}
\section{Applications to quantum indistinguishability}
\label{sec:3} The results presented in the previous section were originally motivated by our efforts to
understand the Berry-Robbins construction from a geometric perspective~\cite{Papadopoulos2004}. The  approach
we have chosen (fully exploiting the Serre-Swan equivalence) turns out to be very well suited for the study
of several problems in quantum mechanics where multiply-connected configuration (or parameter) spaces are
relevant. We will comment on various such possibilities in section \ref{sec:4}.

Here we will discuss an application to the problem of quantum indistinguishability. It goes directly to the core of the Berry-Robbins approach to spin-statistics.  Since this involves the use of the concept of single-valuedness of wave functions, it is convenient to add some comments on it. Let us define
 \[
 \tilde Q_N:=(\mathbb R^3\times\cdots \mathbb \times \mathbb R^3\setminus \Delta),
 \]
where $\Delta$ denotes the set of configurations with two or more coinciding particles. According to Leinaas and Myrheim \cite{Leinaas1977}, if we want to take into account the intrinsic quantum indistinguishability of the particles but still want to work with wave functions defined over a classical configuration space, we  have to consider the quotient space $Q_N:= \tilde Q_N/S_N$ (obtained through identification of all permuted configurations) as the physical  configuration space. This space is multiply connected, which implies there will be inequivalent quantizations of the same classical system. Each possible quantization of the system will be given by a Hilbert space that is to be obtained as the completion of a space of sections of a certain vector bundle over $Q_N$. The elements of that Hilbert space are representatives of rays in the projective Hilbert space and it is important to remark that we are free to multiply any representative by an arbitrary phase factor, without changing the ``single-valued'' character of the section chosen as representative.  In the present context of quantum indistinguishability, the term ``single-valued wave function'' refers to a section of that bundle. As we have seen in the previous section, it is possible to describe a section on a certain bundle over $Q_N$ as a function on the covering space $\tilde Q_N$. This function will in general acquire  different values at points that physically correspond to the same configurations of particles. For this reason such functions are sometimes called ``multiple-valued''. In any case, it should be clear that in all cases relevant to our discussion, including references \cite{Berry1997,Leinaas1977}, it is  $Q_N$ and not $\tilde Q_N$ the space that is to be considered the physical configuration space.

The main purpose of this section is to illustrate the results of section 2 by means of an application of direct physical interest. The main assertion, questioning the applicability of the single-valuedness condition in the form
advocated by Berry and Robbins, might appear as something difficult to accept. Therefore we urge the interested reader to carefully follow the argumentation and explicit computations presented in references \cite{Papadopoulos2004} and \cite{Papadopoulos2010} both of which are based on the mathematical tools developed in section 2.

Very roughly, the strategy of Berry and Robbins in their approach to quantum indistinguishability consists in
replacing the standard spin states by position-dependent ones, in such a way that an exchange of position
leads, at the same time, to an exchange in the spin degrees of freedom. Thus, instead of working with spin
states (in this section I will only be concerned with  the  2 particle case) of the form $|m,m'\rangle$, they work with spin states of the form $|m,m'(\bi
r)\rangle$, where $\bi r$ stands for the relative position vector of the two particles. These transported
spin states are required to satisfy a relation of the form $|m',m(-\bi r)\rangle=(-1)^K|m,m'(\bi r)\rangle$
and also to depend smoothly on $\bi{r}$ and to satisfy a certain property related to parallel transport. It
should be remarked that the existence of such a basis is not at all obvious. The construction presented by
Berry and Robbins in \cite{Berry1997} makes use of Schwinger's representation of spin and has the virtue of
providing a transported spin basis that works for any value of the spin, as well as the physically correct
statistics ($K=2S$). However, as already remarked, there are other possible constructions satisfying all the
requirements but giving the wrong statistics. On the other hand, a construction of a spin basis satisfying their three requirements of smoothness, spin exchange and parallel transport in the case of arbitrary $N$ turns out to be
very difficult to implement. This difficulty led the authors of \cite{Berry1997} to a mathematical problem that was later on reformulated and solved by Atiyah and collaborators \cite{Atiyah2001,Atiyah2002,Atiyah2002a}.   Harrison and Robbins \cite{Harrison2004}  used these results in order to study generalizations of the Schwinger construction for the case of general $N$.

Returning to the 2 particle case, in the
Berry-Robbins approach, wave functions are given by expressions of the form
\begin{equation}
\label{eq:3.1} |\Psi(\bi{r})\rangle=\sum_{m,m'}\psi_{m,m'}(\bi{r})|m,m'(\bi{r})\rangle,
\end{equation}
and are required to satisfy the following \emph{single-valuedness} condition:
\begin{equation}
\label{eq:3.2} |\Psi(\bi{r})\rangle=|\Psi(-\bi{r})\rangle.
\end{equation}
The purpose of this condition is to incorporate indistinguishability already at the level of configuration
space since, for indistinguishable particles, the points $\bi r$ and $-\bi r$ represent exactly the same
configuration. In the case of two particles one can therefore say that the \emph{physical} configuration
space is the projective plane\footnote{The configuration space is actually of the form $\mathbb R^3 \times
\mathbb R_+ \times \mathbb R P^2$, but we are ignoring the first two factors, corresponding to the center of
mass position vector and the relative distance between the particles, because they do not play any role in
the discussion.} $\mathbb R P^2$, obtained from the two-sphere $S^2$ through the identification $\bi{r}\sim
-\bi{r}$. Although the physical motivation for the imposition of the single-valuedness  requirement is very
clear, its implementation in the specific form  (\ref{eq:3.2}) is a very subtle matter. To bring it to the
point, and in order to illustrate the usefulness of the techniques presented in the previous section, let us
highlight the following point.
\begin{remark} The wave function $|\Psi(\bi{r})\rangle$ considered by Berry and Robbins is a map
$|\Psi(\,\cdot\,)\rangle: S^2 \rightarrow \mathbb C^k$, from the sphere to a vector space $\mathbb C^k$, with
the value of $k$ depending on the explicit construction of the transported spin basis. However,  for a
generic configuration space, the wave function will be a section of some vector bundle over the configuration
space. Since in the present case the physical configuration space is the projective space $\mathbb R P^2$,
the wave function should be given by a section of some vector bundle over $\mathbb R P^2$, not over the
sphere. For this reason, the wave function $|\Psi(\bi{r})\rangle$ of Berry and Robbins and, in particular,
the single-valuedness condition, must be treated with caution.
\end{remark}
Taking this remark into account, let us consider the situation of theorem \ref{thm:inv-sections}, with
$M=S^2$ and $G=\mathbb Z_2$. In that case we have $q:S^2\rightarrow S^2/\mathbb Z_2\cong\mathbb R P^2$. Now,
let us assume, according to the remark, that the wave function $\varphi$ is a section on a vector bundle
 $\xi$ over $\mathbb R P^2$, i.e., $\varphi\in \Gamma(\xi)$. By means of the map $\Phi^{\mathbb Z_2}$ from
 theorem \ref{thm:inv-sections} we then obtain an isomorphic copy of $\varphi$
 which is a section of the  vector bundle $q^*\xi$ (a bundle over $S^2$). According to theorem
 \ref{thm:inv-sections}, this new section has well-defined transformation properties under permutations of
 its argument: It must be \emph{invariant} with respect to the natural  $\mathbb Z_2$-action on
 $q^*\xi$ (cf. (\ref{eq:g(m,y)})). In order to obtain an explicit form of this invariance condition, it is
 convenient to regard $q^*\xi$ as a sub-bundle of a \emph{trivial} vector bundle $S^2\times \mathbb C^k$, for
 a  suitably chosen $k$ (this is always possible). After doing this, we find that there must be a map
 $|\bar\varphi(\,\cdot\,)\rangle:S^2\rightarrow \mathbb C^k$ such that
 $\Phi^{\mathbb Z_2}(\varphi)(\bi r)=(\bi r, |\bar \varphi(\bi r)\rangle)$. Therefore, if we want to keep the interpretation
 that (because of indistinguishability) the configuration space is $\mathbb R P^2$, we must conclude that
 this map is precisely the Berry-Robbins wave function: $|\bar \varphi(\bi r)\equiv |\Psi(\bi r)\rangle$. The
 consequences of this fact for the single-valuedness condition (\ref{eq:3.2}) are of the  utmost importance since,
 as shown in \cite{Papadopoulos2010} by means of an explicit example, it might well happen  that
$|\Psi(-\bi r)\rangle=-|\Psi(\bi r)\rangle$.

Therefore, although  the purpose of the single-valuedness
condition of Berry and Robbins is to incorporate
indistinguishability already at the level of configuration space,
the specific requirement (\ref{eq:3.2}) carries with itself
certain ambiguities. This has been analyzed in full detail in
\cite{Papadopoulos2004} and \cite{Papadopoulos2010}. Instead we
propose the imposition of invariance on the wave function, in the
sense of theorem \ref{thm:inv-sections}, as a more clear and
concise way to incorporate indistinguishability into the formalism
of quantum mechanics.
\section{Conclusions and outlook}
\label{sec:4}
In the present paper I have presented a mathematical framework for the study of quantum mechanics on spaces
with a finite fundamental group which is based on the Gelfand-Naimark and Serre-Swan equivalences. The
original motivation for this work has been the problem of quantum indistinguishability. An important aspect
of this problem has been discussed using the tools developed in this paper.

 Let us finish this paper with a brief
outline of some problems for which the results presented in
section \ref{sec:2} might be useful:
\begin{itemize}
\item Following the spirit of the Berry-Robbins approach, Peshkin
has argued in \cite{Peshkin2003} that spin zero particles must be
bosons. Although his work has been criticized by Allen and
Mondragon in \cite{Allen2003}, these authors do not provide a
conclusive argument against the possibility of a topological
origin of the spin-statistics connection. In his reply
\cite{Peshkin2003a} and  subsequent work \cite{Peshkin2006},
Peshkin still argues that, when taking indistinguishability into
account, spin zero fermions can be disregarded. It would be
interesting to spell out in detail Peshkin's calculations using
the language presented here. This could provide a conclusive
answer to that controversy.
\item As remarked in our previous papers \cite{ Reyes-Lega2010, Benavides2010}, Kuckert's work
\cite{Kuckert2004} suggests a close connection between the
spin-statistics connection, on one hand, and the angular momentum
algebra for one and two particles and their possible intertwining
relations, on the other. The techniques presented in this paper
could be used to recast Kuckert's proposal in global language. A
first step in this direction has been taken in
\cite{Reyes-Lega2010}, but a detailed analysis of the
representation theoretic aspects of the problem remains to be
done.
\item Recently, new applications of geometric phases and
topological invariants have appeared in the context of spin
systems and in relation to quantum phase transitions. In
particular, it has been shown  in \cite{Contreras2008} that it is
possible to relate certain topological invariants computed from the
Hamiltonian and relate them to quantum criticality. The use of
projective modules, as presented in the present paper, can be very
convenient in this context from the computational point of view.
\end{itemize}
\ack
It is a pleasure to thank M. Paschke for his key guidance and encouragement  during the initial phase of this work, and also to N.A. Papadopoulos
and F. Scheck for the many enlightening discussions we had over a period of several years about this work. My intellectual debt to them will remain unmatched.
Financial support from DAAD, Universidad de los Andes and Colciencias is gratefully acknowledged.
\appendix
\section{Projective modules and the Serre-Swan theorem}
\label{ap:A}
\renewcommand{\thethm}{A.\arabic{thm}}
\begin{definition} Let $R$ be a unital ring and $\mathcal E$  an abelian group. $\mathcal E$ is said to be a (left)
$R$-\emph{module} if there is a map $R \times \mathcal{E} \rightarrow  \mathcal{E}$, $(r,\varphi)  \mapsto
r\cdot \varphi$ such that, for all $r_1,r_2\in R$ and $\varphi,\varphi_1,\varphi_2\in \mathcal E$, the
following relations hold:
\begin{equation}
\eqalign{r\cdot(\varphi_1+\varphi_2) = r\cdot \varphi_1 + r\cdot \varphi_2 \cr
(r_1+r_2)\cdot\varphi = r_1\cdot \varphi + r_2\cdot \varphi \cr
(r_1\,r_2)\cdot\varphi = r_1\cdot(r_2 \cdot \varphi)\cr
 1\cdot\varphi= \varphi.}
\end{equation}
\end{definition}
\begin{definition} Let $\mathcal E$ be an $R$-module. A subset $S$ of $\mathcal E$ is said to be a set of
\emph{generators} for $\mathcal E$, if every $\varphi\in \mathcal E$ can be written as a sum
$\varphi=\sum_{\sigma\in S} r_\sigma \cdot\sigma$, where $r_\sigma\in R$ and $r_\sigma =0$ for all but a
finite number of elements $\sigma\in S$. If, in addition, the set $S$ is finite, we say that $\mathcal E$ is
\emph{finitely generated}. If $S$ is a set of generators for $\mathcal E$ and if for every $\varphi\in
\mathcal E$ the expansion $\varphi=\sum_{\sigma\in S} r_\sigma \cdot\sigma$ is unique, we call $S$ a
\emph{basis}. A module $\mathcal E$ is called \emph{free} if it admits a basis.
\end{definition}
Given two $R$-modules $\mathcal E$ and $\mathcal F$, we can construct their \emph{direct sum} $\mathcal
E\oplus \mathcal F$ as the set of pairs $(\eta,\varphi)$ with $\eta\in \mathcal E$ and $\varphi \in \mathcal
F$. Addition is defined componentwise, $(\eta,\varphi) +(\eta',\varphi')= (\eta+\eta',\varphi+\varphi')$ and
the $R$-operation  given by $r\cdot(\eta,\varphi)= (r\cdot \eta,r\cdot\varphi)$. Note that a ring $R$ can
also be considered as a module over itself. We can therefore construct the $n$-fold sum
$R^n=R\oplus\cdots\oplus R$, seen as a left $R$-module, with $R$-product
$r\cdot(r_1,\ldots,r_n)=(r\,r_1,\ldots, r\,r_n)$. This module is free, with a standard basis given by the $n$
elements of the form $(0,\ldots,0,1,0,\ldots,0)$.
\begin{proposition}[cf.\cite{Hilton1971}, Proposition 4.1] Let $\mathcal E$ be a finitely generated free $R$-module, with basis
$S=\lbrace\sigma_1,\ldots, \sigma_n\rbrace$. Then $\mathcal E$ is isomorphic to the $R$-module $R^n$.
\end{proposition}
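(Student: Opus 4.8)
The plan is to exhibit an explicit $R$-module isomorphism $\phi:R^n\to\mathcal E$ and check that it is bijective. First I would use the freeness of $R^n$: defining $\phi$ on the standard basis vectors $e_i=(0,\ldots,0,1,0,\ldots,0)$ by $\phi(e_i):=\sigma_i$ and extending $R$-linearly yields the manifestly well-defined formula
\[
\phi(r_1,\ldots,r_n):=\sum_{i=1}^n r_i\cdot\sigma_i,
\]
which makes sense for every tuple $(r_1,\ldots,r_n)\in R^n$ since the sum is finite. That $\phi$ is a homomorphism of $R$-modules is then a routine consequence of the module axioms listed at the start of this appendix: additivity follows by grouping the two sums term by term, and compatibility with the $R$-action follows from the distributivity law $r\cdot(\varphi_1+\varphi_2)=r\cdot\varphi_1+r\cdot\varphi_2$ together with associativity $(r\,r_i)\cdot\sigma_i=r\cdot(r_i\cdot\sigma_i)$.

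The two remaining steps establish bijectivity, and it is here that the distinction between a generating set and a basis does the work. Surjectivity is immediate from the definition of a set of generators: every $\varphi\in\mathcal E$ admits an expansion $\varphi=\sum_i r_i\cdot\sigma_i$, so $\varphi=\phi(r_1,\ldots,r_n)$ lies in the image. For injectivity I would compute the kernel: if $\phi(r_1,\ldots,r_n)=0$, then $\sum_i r_i\cdot\sigma_i=0=\sum_i 0\cdot\sigma_i$, and since $S$ is a \emph{basis} the expansion of the zero element is unique, forcing $r_i=0$ for every $i$. Hence $\ker\phi=\{0\}$ and $\phi$ is injective; combined with surjectivity this gives the desired isomorphism $\mathcal E\cong R^n$.

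The only point worth flagging is bookkeeping rather than genuine difficulty: one must be careful to invoke the correct defining property of a basis at each stage, using that $S$ \emph{spans} for surjectivity and that coefficients are \emph{unique} for injectivity. Indeed, the statement would fail for a mere finite generating set that is not a basis, so the uniqueness clause in the definition of a basis is precisely what guarantees injectivity and hence the isomorphism.
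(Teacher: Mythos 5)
Your proof is correct: the map $\phi(r_1,\ldots,r_n)=\sum_i r_i\cdot\sigma_i$ is the standard isomorphism, and you correctly isolate where each defining property of a basis is used (generation for surjectivity, uniqueness of expansions for injectivity, with the module axioms giving $R$-linearity). The paper itself gives no proof of this proposition---it is quoted from Hilton--Stammbach as background---and your argument is precisely the canonical one that the citation points to.
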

Free modules are similar to vector spaces in the sense that if $S$ is a basis for $\mathcal E$, then
$\sum_{\sigma\in S} r_\sigma\cdot \sigma=0$ implies $r_\sigma=0$ for all $\sigma\in S$. But, of course, there
are modules which are not free.
\begin{example}
Let $\mathbb Z_n$ denote the additive group of integers modulo $n$. Its elements are equivalence classes
$[m]_n$, with $[m]_n=[m']_n$ if and only if $m=m' (\mbox{mod } n)$ and addition defined by
$[r]_n+[s]_n:=[r+s]_n$. $\mathbb Z_n$ can also be regarded as a ring, if we define the product by
$[r]_n[s]_n:=[rs]_n$. Regarded as a module over itself, $\mathbb Z_n$ is a free module, generated by $[1]_n$.
But we can also consider, say, $\mathbb Z_2$ as a $\mathbb Z_6$-module, by defining the ring operation
through $[m]_6\cdot[n]_2:= [mn]_2$. In that case, one easily checks that $\mathbb Z_2$ is generated (over
$\mathbb Z_6$) by $[1]_2$. But the set $S=\lbrace[1]_2\rbrace$ is \emph{not} a basis, because $[1]_2$
satisfies certain \emph{relations} as, for example, $[3]_6\cdot[1]_2=[1]_2$. This shows that $S$ is not
linearly independent or, in other words, that $\mathbb Z_2$ is \emph{not} a free $\mathbb Z_6$-module.
Analogous computations  show that $\mathbb Z_3$ is also a $\mathbb Z_6$-module which, again, is not free.
\end{example}
The following definition of projective module is not the most ``elegant'' one, but is well-suited for our
purposes.
\begin{definition} An $R$-module $\mathcal P$ is said to be \emph{projective} if it is a direct summand of a
free module.
\end{definition}
\begin{proposition}[cf.\cite{Gracia-Bond'ia2001}, Proposition 2.22]\label{ap:A-proposition}
An $R$-module $\mathcal P$ is projective if and only if it is of the form $p (\mathcal F)$, where  $\mathcal
F$ is a free $R$-module and $p: \mathcal F\rightarrow \mathcal F$ an $R$-module homomorphism that is an
idempotent, that is, such that $p^2=p$.
\end{proposition}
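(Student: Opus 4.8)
The plan is to prove the stated equivalence by establishing both implications directly from the definition of a projective module as a direct summand of a free module, constructing in each case the object required by the other side.

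First I would treat the ``only if'' direction. Assume $\mathcal P$ is projective, so that by definition there is a free module $\mathcal F$ and a submodule $\mathcal Q$ with $\mathcal F = \mathcal P \oplus \mathcal Q$ as an internal direct sum. Every $x \in \mathcal F$ then admits a unique decomposition $x = a + b$ with $a \in \mathcal P$ and $b \in \mathcal Q$, and I would define $p:\mathcal F \to \mathcal F$ by $p(x):=a$. Uniqueness of the decomposition makes $p$ well defined, and $R$-linearity follows because the decomposition respects addition and the $R$-action. Since $p(x)=a\in\mathcal P$ while $p(a)=a$ for every $a\in\mathcal P$, one obtains $p^2=p$ and $p(\mathcal F)=\mathcal P$, which is exactly the required idempotent endomorphism. (If ``direct summand'' is read in the external sense $\mathcal F \cong \mathcal P \oplus \mathcal Q$, one simply conjugates this projection by the isomorphism, so that $\mathcal P \cong p(\mathcal F)$; this is all that the phrase ``of the form'' demands.)

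For the ``if'' direction I would start from a free module $\mathcal F$ and an idempotent $R$-homomorphism $p$ with $\mathcal P=p(\mathcal F)$, and produce an explicit complement. The natural candidate is $(1-p)(\mathcal F)$, where $1-p:=\mathrm{id}_{\mathcal F}-p$ is again an $R$-endomorphism of $\mathcal F$ (the endomorphisms form a ring). The decomposition $x = p(x) + (1-p)(x)$, valid for every $x\in\mathcal F$, shows that $\mathcal F = p(\mathcal F) + (1-p)(\mathcal F)$. To see that the sum is direct, I would take $y \in p(\mathcal F)\cap(1-p)(\mathcal F)$, write $y=p(u)$ and $y=(1-p)(v)$, and apply $p$: from the first expression $p(y)=p^2(u)=p(u)=y$, while from the second $p(y)=(p-p^2)(v)=0$, whence $y=0$. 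Therefore $\mathcal F = p(\mathcal F)\oplus(1-p)(\mathcal F)$, so $\mathcal P=p(\mathcal F)$ is a direct summand of a free module and is projective by definition.

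I do not anticipate a genuine obstacle, since both constructions are elementary. The only point needing care is the bookkeeping in the first direction: the definition furnishes $\mathcal P$ merely as a summand of $\mathcal F$, so to obtain an honest idempotent whose image is exactly $\mathcal P$ one must use the internal direct-sum decomposition (or pass through the isomorphism in the external reading). The auxiliary identity $(1-p)^2 = 1 - 2p + p^2 = 1-p$ confirms that $1-p$ is itself idempotent and makes the symmetry between $\mathcal P$ and its complement manifest, though it is not strictly required for the argument.
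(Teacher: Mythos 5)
Your proof is correct, and there is nothing in the paper to diverge from: the paper states this proposition without proof, citing it as Proposition 2.22 of Gracia-Bond\'{i}a, V\'{a}rilly and Figueroa. Your two directions --- building the idempotent as the projection onto $\mathcal P$ along a complement $\mathcal Q$ in an internal decomposition $\mathcal F = \mathcal P \oplus \mathcal Q$, and conversely splitting $\mathcal F = p(\mathcal F)\oplus(1-p)(\mathcal F)$ by the identity $x = p(x)+(1-p)(x)$ together with the intersection argument --- constitute exactly the standard textbook argument, and they correctly take as starting point the paper's own definition of projective module (a direct summand of a free module), so no translation between competing definitions is needed.
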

\begin{example}
From the previous example we know that both $\mathbb Z_2$ and $\mathbb Z_3$ can be regarded as $\mathbb
Z_6$-modules. It follows that $\mathbb Z_2\oplus \mathbb Z_3$ is also a $\mathbb Z_6$-module. In fact, it is
isomorphic to $\mathbb Z_6$. The isomorphism is given by the following map:
\begin{eqnarray}
\label{eq:A.Z2}
\;\;\mathbb Z_2\oplus \mathbb Z_3 &\longrightarrow &\;\;\;\;\; \mathbb Z_6 \nonumber\\
([m]_2,[n]_3)&\longmapsto & [3m+ 2n]_6.
\end{eqnarray}
Let us consider the $\mathbb Z_6$-linear map $p: \mathbb Z_6\rightarrow \mathbb Z_6$ defined by $p([m]_6):=
[3m]_6$. We have $p^2([m]_6)=[9m]_6= [3m + 6m ]_6= p ([m]_6)$, so that $p(\mathbb Z_6)$ is a projective
module, given precisely by the image of $\mathbb Z_2$ under \eref{eq:A.Z2}. We thus have $\mathbb Z_2\cong p
(\mathbb Z_6)$, illustrating the previous proposition.
\end{example}
\begin{example}
Let $M$ be a compact manifold and $E\stackrel{\pi}{\rightarrow} M$ a (real/complex) vector bundle. Then the
space $\Gamma(E)$ of all continuous sections of the bundle has the structure of a finitely generated
projective module over the algebra $C(M)$ of continuous (real/complex) functions on $M$. In fact, the space
$C(M)$, being an algebra, is also a ring, with respect to pointwise addition and multiplication of functions.
$\Gamma(E)$ is a vector space over $\mathbb K(=\mathbb R,\mathbb C)$ and, in particular, an abelian group.
The ring operation is given by the  product $C(M)\times \Gamma(E)\rightarrow \Gamma(E)$, $(f,\sigma)\mapsto
f\cdot \sigma$, pointwise defined: $(f\cdot\sigma)(x):= f(x) \sigma(x)$. The fact that this module is
finitely generated and projective is part of the \emph{Serre-Swan theorem}\cite{Serre1957-58,Swan1962}. This
theorem states that there is an equivalence between the categories of vector bundles over compact spaces $M$,
on one side, and of finitely generated projective modules over the respective algebras $C(M)$, on the other.
\end{example}
\begin{remark}\label{ap:A-remark} A detailed proof of this theorem can be found, e.g., in \cite{Gracia-Bond'ia2001}. For the
purpose of this paper, the following information about the construction of the equivalence will be
sufficient:
\begin{itemize}
\item A vector bundle with typical fibre $V$ can be completely described by an open cover
$\lbrace U_1,\ldots,U_m\rbrace$ of $M$, together with a family of transition functions $g_{ij}:U_i\cap
U_j\rightarrow \mbox{Gl}(V)$ satisfying certain compatibility (cocycle) conditions \cite{Nakahara1990}. Let
$\lbrace \psi_j \rbrace_j$ be a partition of unity subordinate to the cover $\lbrace U_j\rbrace_j$ (here we
use the convention $\sum_j\psi_j^2=1$). Set $N=m\dim (V)$. Then, the projective module corresponding to this
bundle is defined by an idempotent $P: C(M)^N\rightarrow C(M)^N$, acting  as an $N\times N$ matrix-valued
function constructed from $\dim (V)^2$ blocks, the $ij$-block of which is given  by $g_{ij}\psi_i\psi_j$.
\item On the other hand, it follows from proposition \ref{ap:A-proposition} that any finitely generated
projective module over $C(M)$ is given by a $C(M)$-valued matrix idempotent $P$ as $P(C(M)^N)$, for some $N$.
In this case, the fibre of the corresponding vector bundle over the point $x\in M$ is given by $P(x)(\mathbb
K^N)$, where $\mathbb K=\mathbb R,\mathbb C$, depending on whether we are working with real, respectively
complex functions.
\item A vector bundle of rank $r$ which is \emph{not} trivial will give place to a projective module which is \emph{not}
free. Thus, finding a set of $r$ linearly independent, nowhere-vanishing sections of such a bundle amounts,
at the algebraic level, to find a basis for the module.
\end{itemize}
\end{remark}
\begin{example}
The M\"{o}bius bundle is the simplest example  of a nontrivial \emph{real} line bundle.  Using the coordinate
$\theta$ to denote points in the circle  $S^1$ ($0\leq \theta\leq 2 \pi$), we can cover it with two open
neighbourhoods, the first one, $U_a$, including all points of $S^1$ but the one corresponding to  $\theta=0$
and the second one, $U_b$, including all points of $S^1$ but the one corresponding to $\theta = \pi$. Using
the bundle construction theorem, we can construct a real line bundle by specifying transition functions
$g_{ab}: U_a\cap U_b\rightarrow \mbox{Gl}(n,\mathbb{R})$ satisfying the usual cocycle conditions (in this
case $n=1$). Since $U_a\cap U_b$ has two connected components, we can define
\begin{equation}
\label{eq:A.1} \label{cases}
g_{ab}(\theta):=\cases{+1& for  $\;0 <\theta< \pi$,\\
-1& for  $\;\pi <\theta< 2\pi$.\\}
\end{equation}
Putting  $g_{ba}= g_{ab}$ and $g_{aa}=g_{bb}=1$, we obtain  transition functions that are locally constant
(hence continuous) and that give place to a real line bundle with the topology of an open M\"{o}bius strip.
Recall now that every vector bundle can be expressed as a subbundle of a trivial bundle of higher rank. In
the case of the M\"{o}bius bundle, this allows us to visualize the bundle in three dimensions. Consider the
trivial bundle over $S^1$ with total space $S^1\times \mathbb{R}^3$. The M\"{o}bius bundle can be described
as the subbundle of this trivial bundle whose fibre over  $\theta$ is the real line generated by the vector
$v(\theta)=(0,\sin(\theta/2),\cos(\theta/2))$. Note that, as we go around the circle, this vector rotates
through an angle of $\pi$, thus generating the twist of a M\"{o}bius strip. The description of this bundle by
means of a projector is as follows.  For each $\theta$ we can consider the projection from $\mathbb{R}^3$
onto the line generated by $v(\theta)$. Since the first component of $v(\theta)$ is always zero, we may
consider only the projection from the $y$-$z$ plane. A simple calculation shows that the matrix form of this
projector is
\begin{equation}
P(\theta)= \frac{1}{2}\left(\begin{array}{cc}
1-\cos\theta & \sin\theta\\
\sin\theta & 1+\cos\theta
\end{array}\right).
\end{equation}
Let $\mathcal R:=C(S^1,\mathbb{R})$ denote the algebra of continuos, real functions on the circle. Notice
that, although the components of $v$ do \emph{not} belong to  $\mathcal R$, the components of $P$ certainly
do. The projective $\mathcal R$-module associated to the bundle is thus given by $P(\mathcal R^2)$.
Explicitly, the elements of the module are vector valued maps $\sigma: S^1\mapsto \mathbb{R}^2$ of the form
$\sigma= a\sigma_1 + b \sigma_2$, where $a,b \in \mathcal R$ and $\sigma_1$ and $\sigma_2$ are the two
columns of $P$. This is precisely the space of sections of the  bundle the total space of which is given (as
a \emph{set}) by $\lbrace(\theta, \lambda v(\theta))\in S^1\times \mathbb{R}^3\,|\, \lambda\in
\mathbb{R}\rbrace$. Notice also that if we define  $\psi_a(\theta):=|\sin(\theta/2)|$ for $\theta\neq
0,2\pi$, with $\psi_a (0)\equiv \psi_a(2\pi):=0$ and, similarly, $\psi_b(\theta):=|\cos(\theta/2)|$ for
$\theta \neq \pi$, with $\psi_b(\pi):=0$, then $\mbox{supp}(\psi_{i})\subseteq U_{i}$ and $\psi_i\in \mathcal
R$ ($i=a,b$). One then checks that the components of $P$ are precisely given by the functions
$g_{ij}\psi_i\psi_j$ (with $i,j=a,b$ and $g_{ij}$ as in \eref{eq:A.1}), in accordance with remark
\ref{ap:A-remark}.
\end{example}
\section{The regular representation and projection operators}
\label{ap:B} Let $G$ be a finite group of order $|G|$. Denote with $\rho_i: G\rightarrow \mbox{Gl}(U^i)$ the
inequivalent irreducible representations of $G$, of dimension $n_i=\mbox{dim}\, U^i$, $i=1,\ldots,N$. If
$\mathcal F (G)$ denotes the vector space of complex functions on $G$, then the \emph{regular} representation
of $G$ is the one on $\mathcal F (G)$ which is induced by the group multiplication. It is a well-known fact
that the regular  representation contains all irreducible representations of $G$, each one with a
multiplicity equal to its dimension~\cite{Sternberg1994}:
\begin{equation}
\label{eq:B.1} \mathcal F (G)\cong\bigoplus_{i=1}^{N}n_i U^i.
\end{equation}
Consider now $\mathcal F (G)$ as a representation space for $G\times G$, where the representation $\hat r^G$
of $G\times G$ on $\mathcal F (G)$ is induced by the action
\begin{eqnarray}
(G\times G)\times G &\longrightarrow& G\nonumber\\
((g_1,g_2),h)&\longmapsto& g_1 h g_2^{-1}.
\end{eqnarray}
There is a close relation between the direct sum in (\ref{eq:B.1}) and the decomposition of $\hat r^G$ into
irreducible $G\times G$ representations. It is obtained in the following way. The representation $\rho_i$
induces a representation $\tilde\rho_i$ of $G$ on the dual space $U^{i*}$ (which is also irreducible and is
defined through $(\tilde\rho_i(g)\varphi)(u):=\varphi(\rho_i(g^{-1})u)$, with $\varphi\in U^{i*}$ and $u\in
U^i$) and it turns out that, as a $G\times G$ representation, $\rho_i\otimes \tilde\rho_i$ is irreducible.
Explicitly, we have:
\begin{eqnarray}
\rho_i\otimes \tilde\rho_i: G\times G &\longrightarrow& \mbox{Gl}(U^i\otimes U^{i*})\nonumber\\
\hspace{1.6cm}(a,b) &\longmapsto & \rho_i(a)\otimes \tilde\rho_i(b).
\end{eqnarray}
Irreducibility follows directly from the fact that $\tr (\rho_i(a)\otimes
\tilde\rho_i(b))=\tr\rho_i(a)\,\tr\tilde\rho_i(b)$. Consider now, for each $i$, the linear map $S_i:
U^i\otimes U^{i*}\rightarrow \mathcal F(G)$, defined on simple tensors by the formula $S_i(u\otimes
\varphi)(g):=\varphi(\rho_i(g^{-1})u)$, where $g\in G, u\in U^{i}$ and $\varphi\in U^{i*}$. It is easy to
see, using the orthogonality relations for the matrix elements of $\rho_i$, that the map $S_i$ is injective.
Moreover, it is $G\times G$-equivariant. This follows directly from the definitions given above since, for
$(a,b)\in G\times G$, $u\otimes\varphi\in U^i\otimes U^{i*}$ and $g\in G$, we have:
\begin{eqnarray}
S_i\left((a,b)\cdot u\otimes\varphi\right)(g) & \equiv &
              S_i\left(\rho_i(a)u\otimes\tilde\rho_i(b)\varphi\right)(g)\nonumber\\
   &=&  \big(\tilde\rho_i(b)\varphi\big)
\left(\rho_i(g^{-1})\rho_i(a)u\right)\nonumber\\
&=&  \varphi\left(\rho_i((a^{-1}gb)^{-1})u\right)\nonumber\\
  &=& S_i(u\otimes\varphi)((a,b)^{-1}\cdot g\nonumber)\\
 &=& \left(\hat{r}^G(a,b) S_i(u\otimes\varphi)\right)(g)\nonumber\\
 &\equiv& \left((a,b)\cdot S_i(u\otimes \varphi)\right)(g).
\end{eqnarray}
From this and the injectivity of $S_i$ it follows, from Schur's lemma, that the representations
 $(U^i\otimes U^{i*},\rho_i\otimes \tilde
\rho_i)$ and  $({\mathcal F}(G)|_{\mbox{Im}(S_i)},\hat{r}^G|_{\mbox{Im}(S_i)})$ must be equivalent. But then,
taking the sum of all $S_i$, we obtain a bijective linear map
\begin{equation}
S: \bigoplus_{i=1}^N(U^i\otimes U^{i*}) \longrightarrow \mathcal F(G)
\end{equation}
that furnishes an equivalence between $\big(\bigoplus_{i=1}^N (U^i\otimes U^{i*}),\bigoplus_{i=1}^N
(\rho_i\otimes \tilde \rho_i)\big)$ and  $({\mathcal F}(G),\hat r^G)$.

The isomorphism $S$ can be used to obtain an explicit formula for the projection from ${\mathcal F}(G)$ to
the copy of $U^i\otimes U^{i*}$ inside ${\mathcal F}(G)$, as explained below.  Chose a basis
$\{e^{\mt{(i)}}_r\}_r$ for $U^i$ and let $\{\tilde e^{\mt{(i)}}_r\}_r$ be the dual basis of $U^{i*}$, induced
by $\{e^{\mt{(i)}}_r\}_r$. Denote with $R^{\mt{(i)}}(g)$ the representing matrices of $\rho_i$, with respect
to $\{e^{\mt{(i)}}_r\}_r$:
\begin{equation}
\rho_i(g)(e^{\mt{(i)}}_r)=\sum_{r'=1}^{n_i} R^{\mt{(i)}}_{r',r}(g)e^{\mt{(i)}}_{r'}.
\end{equation}
Now, every function $f\in \mathcal F(G)$ can be written in the form $f=\sum_{g\in G} f(g)\delta_g$, where
$\delta_g$ is the characteristic function
\begin{equation}
\delta_g(h) := \left\{
\begin{array}{rl}
1 & \mbox{if } h=g,\\
0 & \mbox{if } h\neq g.
\end{array} \right.
\end{equation}
To obtain the projection operators, it is sufficient to compute $S^{-1}(\delta_g)$. We therefore seek the
(unique, $g$-dependent) coefficients $\lambda^i_{r,r'}$ such that
\begin{equation}
\label{eq:B.2} \delta_g= S\left( \sum_{i=1}^N\sum_{r,r'=1}^{n_i}\lambda^i_{r,r'} e^{\mt{(i)}}_r \otimes\tilde
e^{\mt{(i)}}_{r'}\right).
\end{equation}
Using the definition of $S$, we obtain:
\begin{eqnarray}
\hspace{-1cm} \delta_g(h) &=& S\left( \sum _{i=1}^N \sum_{r,r'=1}^{n_i}\lambda^i_{r,r'}
e^{\mt{(i)}}_r\otimes\tilde
e^{\mt{(i)}}_{r'} \right)(h)\nonumber\\
&=&   \sum _{i=1}^N \sum_{r,r'=1}^{n_i}\lambda^i_{r,r'} S_i(e^{\mt{(i)}}_r\otimes\tilde e^{\mt{(i)}}_{r'})
(h) =   \sum _{i=1}^N \sum_{r,r'=1}^{n_i}\lambda^i_{r,r'} \tilde
e^{\mt{(i)}}_{r'}(\rho_i(h^{-1})e^{\mt{(i)}}_r) \nonumber\\
&=&\sum _{i=1}^N \sum_{r,r',r''=1}^{n_i}\lambda^i_{r,r'}
 R^{\mt{(i)}}_{r'',r}(h^{-1})\tilde e^{\mt{(i)}}_{r'}(e^{\mt{(i)}}_{r''}) =
 \sum _{i=1}^N \sum_{r,r'=1}^{n_i}\lambda^i_{r,r'}R^{\mt{(i)}}_{r',r}(h^{-1}).
\end{eqnarray}
Multiplying the last equation by $R^{\mt{(j)}}_{k,l}(h)$ and summing over $h\in G$ we then obtain, using the
orthogonality relations:
\begin{equation}
\lambda^i_{r,r'}=\frac{n_i}{|G|}R^{\mt{(i)}}_{r,r'}(g).
\end{equation}
This implies, for $f\in \mathcal F(G)$,
\begin{equation}
\label{eq:B.4} f(g) = \sum_{h\in G} f(h) \delta_h(g)=\sum _{i=1}^N\sum_{r=1}^{n_i}\frac{n_i}{|G|} \sum_{h\in
G} R^{\mt{(i)}}_{r,r}(h^{-1})f(h^{-1}g).
\end{equation}
This shows that the operator $\frac{n_i}{|G|}\sum_{r=1}^{n_i}\sum_{h\in
G}R^{\mt{(i)}}_{r,r}(h^{-1})h\cdot(\,-\,)$ is the projection operator ${\mathcal F}(G)\rightarrow U^i\otimes
U^{i*}$. Moreover, since $U^i\otimes U^{i*}$ is isomorphic to the $n_i$-fold sum $U^i\oplus\cdots\oplus U^i$,
it is natural to consider the operators
\begin{equation}
\label{eq:B.3} P^{(i)}_{r,s}:=\frac{n_i}{|G|}\sum_{h\in G}R^{\mt{(i)}}_{r,s}(h^{-1})h\cdot(\,-\,).
\end{equation}
It turns out that these operators allow one to explicitly obtain the $r^{\mbox{\tiny th}}$ copy of $U^i$
inside $\mathcal F (G)$. \Eref{eq:B.3} is the starting point for the construction of a decomposition of the
algebra $C(M)$ into a direct sum of projective modules. The reader might have noticed that these operators
are also used in the field of Quantum Chemistry, e.g. for the construction of symmetry-adapted basis (cf.
\cite{Atkins1999}).

\section*{References}

\bibliographystyle{unsrt}

\end{document}